\DeclareMathOperator{\dpat}{dp}
\DeclareMathOperator{\arctanh}{arctanh}
\newcommand{\Ilow}{I_\text{low}}
\newcommand{\Ihigh}{I_\text{high}}
\newcommand{\N}{\mathbb{N}}
\newcommand{\Pj}{\mathbb{CP}}
\newcommand{\Cp}{\mathbb{C}}
\newcommand{\dP}{\mathrm{d}P}
\newcommand{\pvec}[1]{\mathbf{[#1]}}
\renewcommand{\vec}[1]{\mathbf{#1}}
\DeclareMathOperator{\ud}{d}
\newcounter{eqlist}
\renewcommand*\theeqlist{\Roman{eqlist}}
\newcounter{eqlistred}
\renewcommand{\pdv}[2]{\begingroup 
\@tempswafalse\toks@={}\count@=\z@ 
\@for\next:=#2\do 
{\expandafter\check@var\next\@nil
 \advance\count@\der@exp 
 \if@tempswa 
   \toks@=\expandafter{\the\toks@\,}%
 \else 
   \@tempswatrue 
 \fi 
 \toks@=\expandafter{\the\expandafter\toks@\expandafter\partial\der@var}}%
\frac{\partial\ifnum\count@=\@ne\else^{\number\count@}\fi#1}{\the\toks@}%
\endgroup} 
\def\check@var{\@ifstar{\mult@var}{\one@var}} 
\def\mult@var#1#2\@nil{\def\der@var{#2^{#1}}\def\der@exp{#1}} 
\def\one@var#1\@nil{\def\der@var{#1}\chardef\der@exp\@ne} 
\theoremstyle{plain}
\newtheorem{theorem}{Theorem}
\newtheorem{corollary}[theorem]{Corollary}
\newtheorem{lemma}[theorem]{Lemma}
\theoremstyle{definition}
\newtheorem{definition}{Definition}
\newtheorem{problem}{Problem}
\theoremstyle{remark}
\newtheorem{remark}{Remark}
\theoremstyle{remark}
\newtheorem{example}{Example}
\begin{document}
\title[Bi-rational mappings in 4D]{Bi-rational maps in four dimensions with two invariants}
\date{\today}
\author[G. Gubbiotti]{Giorgio Gubbiotti}
\author[N. Joshi]{Nalini Joshi}
\author[D. T. Tran]{Dinh Thi Tran}
\address{School  of  Mathematics  and  Statistics  F07,  The  University  of  Sydney,  NSW  2006, Australia}
\email{giorgio.gubbiotti@syndey.edu.au}
\email{nalini.joshi@syndey.edu.au}
\email{dinhthi.tran@syndey.edu.au}
\author[C-M. Viallet]{Claude-Michel Viallet}
\address{LPTHE, UMR 7589 Centre National de la Recherche Scientifique \& UPMC Sorbonne Universit\'es, 4 place Jussieu, 75252 Paris Cedex 05, France}
\email{claude.viallet@upmc.fr}
\subjclass[2010]{37F10; 37J15; 39A10}
\begin{abstract}
    In this paper we present a class of four-dimensional bi-rational 
    maps with two invariants satisfying certain constraints on degrees.
    We discuss the integrability properties of these maps from
    the point of view of degree growth and Liouville integrability.
\end{abstract}

\maketitle

\section{Introduction}

In this paper we classify maps of $\Pj^{4}$ to itself, which possess two
polynomial invariants, under certain conditions.
The outcomes of our classification include eight new classes of maps,
which have surprising properties.
Despite the existence of two invariants, there turn out to be
non-integrable cases, with exponential growth.
Other cases are integrable, with cubic and quadratic growth.
The cases of cubic growth are only possible in dimension
greater than two \cite{Diller1996,DillerFavre2001}. 
We discuss the geometric properties of these systems
\cite{Bellon1999}.

In dimension two it is  well-known that integrable bi-rational 
maps can be characterized by the existence of a rational invariant.
For instance most of the integrable maps on the plane fall in the class
of QRT maps \cite{QRT1988,QRT1989}, even though there are
some notable exceptions 
\cite{VialletRamaniGrammaticos2004,Duistermaat2011book, RJ15}.
The integrability of these maps can be explained geometrically
and has led to many interesting developments
\cite{Sakai2001,Duistermaat2011book,Tsuda2004}. 

In higher dimension an analogous general framework does not exist.
In particular, for mappings in four dimension, a generalisation of the 
QRT class \cite{QRT1988,QRT1989} was given in \cite{CapelSahadevan2001}. 
However, this generalisation does not cover all possible integrable 
maps in four dimensions.
Indeed, some of the new maps obtained in \cite{CapelSahadevan2001} 
turn out to be autonomous versions of Painlev\'e hierarchies \cite{Hay2007} 
which are \emph{multiplicative} equations in Sakai's scheme \cite{Sakai2001}.
On the other hand,  there  exists hierarchies of \emph{additive}
discrete Painlev\'e equations too \cite{CresswellJoshi1999}.
Equations coming from the hierarchies of additive Painlev\'e equations
are naturally outside the framework of \cite{CapelSahadevan2001}. 
Other examples of four-dimensional maps falling outside the class presented in 
\cite{CapelSahadevan2001} are given in
\cite{JoshiViallet2017,GJTV_sanya,PetreraSuris2010,CelledoniMcLachlanOwrenQuispel2013,
CelledoniMcLachlanOwrenQuispel2014,PetreraPfadlerSuris2009}.

Our starting point is \cite{JoshiViallet2017}, where the authors considered the
\emph{autonomous limit} of the second member of the 
$\dP_\text{I}$ and $\dP_\text{II}$ hierarchies \cite{CresswellJoshi1999}.
We will denote these equations as $\dP_\text{I}^{(2)}$ and $\dP_\text{II}^{(2)}$ equations.
These $\dP_\text{I}^{(2)}$ and $\dP_\text{II}^{(2)}$ equations
are given by recurrence relations of order four,
and shown to be integrable
according to the algebraic entropy approach.
Therein the authors showed that both maps possess two polynomial invariants.
Using these invariants, they produced the dual maps of the 
$\dP_\text{I}^{(2)}$ and $\dP_\text{II}^{(2)}$ equations in the sense of
\cite{QuispelCapelRoberts2005}.
Moreover, they showed that these dual maps are integrable according to the 
algebraic entropy test and also possess invariants.
In fact, the number of invariants showed that the dual maps are 
actually \emph{superintegrable}.
Finally they gave a scheme to construct recurrence relations of an assigned form.
Using this scheme in \cite{JoshiViallet2017} some new examples, with no classification
purposed were presented.
Starting from these considerations in this paper we consider and solve 
the problem of finding  all fourth-order bi-rational maps  possessing two polynomial invariants 
of general enough form to contain those of the $\dP_\text{I}^{(2)}$ and  
$\dP_\text{II}^{(2)}$ equations.

The structure of the paper is the following:
in section \ref{sec:setting} we give a concise explanation of the background
material we need. In particular we discuss the various definitions
of integrability for mapping we are going to use throughout the paper.
In section \ref{sec:theproblem} we present the motivations for our search
and we present our the search method and we state the general result. 
In section \ref{sec:theclass} we give the explicit form of the maps
we derived with the method of section \ref{sec:theproblem} and we discuss
their integrability properties following the discussion of section 
\ref{sec:setting}.
Finally, in section \ref{sec:conclusions} we make some general comments on
the maps we obtained, and we underline the possible future development.

\section{Setting}
\label{sec:setting}

In this Section we give the fundamental definitions we need to explain
how our list of equations is found and what kind of integrability
we are going to consider within this paper. One can also find this setting in our shot communication \cite{GJTV_sanya}.

\subsection{Bi-rational maps and invariants}

The main subject of this paper are \emph{bi-rational maps} of
the complex projective space into itself:
\begin{equation}
    \varphi \colon \pvec{x}\in\Pj^{n} \to \pvec{x'} \in \Pj^{n},
    \label{eq:mapp4}
\end{equation}
where $n>1$\footnote{Bi-rational maps in $\Pj^{1}$ are just M\"obius
transformations so everything is trivial.} and  $\pvec{x}=\left[ x_{1}:x_{2}:\dots:x_{n+1} \right]$ 
and $\pvec{x'}=\left[ x_{1}':x_{2}':\dots:x_{n+1}' \right]$
to be homogeneous coordinates on $\Pj^{n}$.
Moreover, we recall that a bi-rational map is a rational map
$\varphi\colon V\to W$ of algebraic varieties $V$ and $W$ such that there exists a
rational map $\psi\colon W\to V$, which is the \emph{inverse} of $\varphi$ in the
dense subset where both maps are defined \cite{Shafarevich1994}.

Bi-rational maps of the form \eqref{eq:mapp4} are the natural mathematical
object needed to study \emph{autonomous single-valued invertible rational 
recurrence relations}.
Indeed, an autonomous recurrence relation of order $n$ is a relation
where the $(n+1)$-th element of a sequence is defined in terms of the preceding
$n$, i.e. an expression of the form:
\begin{equation}
    w_{k+n} = f\left( w_{k},\dots,w_{k+n-1} \right).
    \label{eq:recrelgen}
\end{equation}
A recurrence relation is \emph{autonomous} if the function $f$ in
\eqref{eq:recrelgen} does not depend explicitly on $k$.
Moreover,  we say that a recurrence relation is \emph{rational} if 
the function $f$ in \eqref{eq:recrelgen} is a rational function.
Finally, the recurrence is invertible and single-valued if equation 
\eqref{eq:recrelgen} is solvable uniquely with respect to $w_{0}$.
All the terms of the sequence $w_{k}$ for $k>n$ are then obtained by
iterated substitution of the previous one into equation \eqref{eq:recrelgen}.
For this reason it is possible to interpret the recurrence relation
\eqref{eq:recrelgen} as a map of the complex space of dimension $n$ into
itself as:
\begin{equation}
    \varpi \colon \vec{w}\in\Cp^{n} \to \vec{w'} \in \Cp^{n},
    \label{eq:recrelmapgen}
\end{equation}
where $\vec{w}=\left(w_{n},w_{n-1},\dots,w_{0}  \right)$ are the
initial conditions and the map acts as:
\begin{equation}
    \vec{w'} = \left(f\left(\vec{w} \right),w_{n},w_{n-1},\dots,w_{1}\right).
    \label{eq:recrelgenmapexpl}
\end{equation}
The recurrence relation \eqref{eq:recrelgen} is then given by 
the repeated application of the map $\varpi$, namely $w_{n+k}$ is the
first component of  $\varpi^{k}$.
Interpreting the coordinates $\vec{w}\in\Cp^{n}$ as an affine chart
in $\Pj^{n}$, i.e. assuming that 
$\left( w_{n-1},\dots,w_{0} \right)=\left[ w_{n-1}:\dots:w_{0}:1 \right]$
we have that the map \eqref{eq:recrelmapgen} can be brought to a bi-rational
map of $\Pj^{n}$ into itself of the form \eqref{eq:mapp4}.

Throughout the paper we will often make use of the correspondence between
bi-rational maps and recurrence relations.
This is due to the fact that some definitions are
easier to state and use in the projective setting, while others are easier to
state and use in the affine one.
In any case for us ``bi-rational map'' and ``recurrence relation'' will
be completely equivalent terms.


One characteristic of integrability is the existence of \emph{first integrals}.
In the continuous context,  for finite dimensional systems, integrability refers to 
 the existence of a ``sufficiently'' high number of 
first integrals, i.e. of \emph{non-trivial} functions constant
along the solution of the differential system. In particular, for 
a Hamiltonian system,  the number of first integrals is less  as its integrability 
was given by Liouville \cite{Liouville1855}.
In discrete setting, the analogue of 
first integrals  for maps are the \emph{invariants} which is defined as follows. 
\begin{definition}
    An \emph{invariant} of a bi-rational map $\varphi\colon\Pj^{n}\to\Pj^{n}$
    is a homogeneous function $I \colon \Pj^{n}\to \Cp$ such that
    it is preserved under the action of the map, i.e. 
    \begin{equation}
        \varphi^{*}\left( I \right) = I,
        \label{eq:fintdef}
    \end{equation}
    where $\varphi^{*}\left( I \right)$ means the pullback of $I$
    through the map $\varphi$, i.e. 
    $\varphi^{*}\left( I \right)=I\left( \varphi\left( \pvec{x} \right) \right)$.
    For $n>1$, an invariant is said to be \emph{non-degenerate} if:
    \begin{equation}
        \pdv{I}{x_{1}}\pdv{I}{x_{n}}\neq 0.
        \label{eq:nondegcond}
    \end{equation}
    Otherwise an invariant is said to be \emph{degenerate}.
    \label{def:inv}
\end{definition}


In what follows we will concentrate on a particular class 
of invariants:

\begin{definition}
    An invariant $I$ is said to be \emph{polynomial}, if in the
    affine chart $\left[ x_{1}:\dots:x_{n}:1 \right]$ the function
    $I$ is a polynomial function.
    \label{def:polyinv}
\end{definition}

In definition \ref{def:polyinv} we use $x_{n+1}$ as homogenising variable 
to go from an affine (polynomial) form to a projective (rational) form of the
invariants.
A polynomial invariant in the sense of definition \ref{def:polyinv}
written in homogeneous variables is always a rational function
homogeneous of degree 0.
The form of the polynomial invariant in homogeneous coordinates
is then given by:
\begin{equation}
    I\left( \pvec{x} \right) =\frac{ I'\left( \pvec{x} \right)}{x_{n+1}^{d}},
    \quad d = \deg I'\left( \pvec{x}\right),
    \label{eq:polyinvrat}
\end{equation}
where $\deg$ is the total degree.

To better characterize the properties of these invariants we introduce
the following:
\begin{definition}
    Given a polynomial function $F\colon \Pj^{n}\to V$, where $V$
    can be either $\Pj^{n}$ or $\Cp$, we define the \emph{degree pattern} 
    of $F$ to be:
    \begin{equation}
        \dpat F = \left( \deg_{x_{1}} F,\deg_{x_{2}} F,\dots,\deg_{x_{n}} F\right).
        \label{eq:dpatt}
    \end{equation}
    \label{def:dpat}
\end{definition}

Finally we will consider invariants which are not
of generic shape, but satisfy the  following condition: 
\begin{definition}
    We say that a  function  $I\colon\Pj^{n}\to\Cp$
    is \emph{symmetric} if it is invariant under 
    the following involution:
    \begin{equation}
        \iota \colon \left[ x_{1}:x_{2}:\dots:x_{n}:x_{n+1} \right]
        \to
        \left[ x_{n}:x_{n-1}:\dots:x_{1}:x_{n+1} \right],
        \label{eq:symmcond}
    \end{equation}
    \label{def:symm}
    i.e. $\iota^{*}\left( I \right)=I$.
\end{definition}

\subsection{Integrability of bi-rational maps}

Integrability both for continuous and discrete systems can be defined
in different ways, see \cite{whatisintegrability1991,HietarintaBook}
for a complete discussion of the continuous and the discrete case.
Different ways of defining integrability do not always necessarily 
agree, even though most of the time they do.
We underline that the list we are going to make is not meant to be 
completely exhaustive of all the possible definitions of integrability.
We will discuss only the definitions for autonomous recurrence 
relations we will need throughout the rest of the paper.
We mention that additional definitions of integrability 
have been proposed for non-autonomous systems.

In general the solution of a recurrence relation of order
$n$ will depend on $n$ arbitrary constants.
This means that if a recurrence relation defined by the map 
$\varphi\colon\Pj^{n}\to\Pj^{n}$ possesses $n-1$ invariants
$I_{j}$, $j=1,\dots,n-1$, then, in principle, it is possible
to reduce it to a map $\hat{\varphi}\colon\Pj^{1}\to\Pj^{1}$
by solving the relations:
\begin{equation}
    I_{j} = \kappa_{j},
    \label{eq:intred}
\end{equation}
where $\kappa_{j}$ are the value of the invariants on a set of initial
data.
This stimulates the simplest and most natural definition of 
integrability for maps:

\begin{definition}[Existence of invariants]
    An $n$-dimensional map is (super)integrable if 
    \emph{it admits $n-1$ functionally independent invariants}.
    \label{def:invint}
\end{definition}

\begin{remark}
    We underline that, in general, the reduction to a lower-dimensional map 
    solving the system of equations \eqref{eq:intred} can break the bi-rationality.
    \label{rem:birat}
\end{remark}

Definition \ref{def:invint} is very general, and works for arbitrary maps.
If some additional structure are present, then the number of invariants
needed for integrability can be significantly reduced.
A special, but relevant case is the one of Poisson maps.

\begin{definition}[Poisson structures and Poisson maps \cite{CapelSahadevan2001, Olver1986}]
    In affine coordinates 
    $\vec{w}$
    a \emph{Poisson structure of rank $2r$} is a skew-symmetric matrix $J=J\left( \vec{w} \right)$ 
    of constant rank $2r$ such that the \emph{Jacobi identity holds}:
    \begin{equation}
    \label{eq:JacoIden}
    \sum_{l=1}^{n}\left(J_{li}\frac{\partial J_{jk}}{\partial w_{l-1}}
        +J_{lj}\frac{\partial J_{ki}}{\partial w_{l-1}}
        +J_{lk}\frac{\partial J_{ij}}{\partial w_{l-1}}\right)=0,
        \quad
        \forall i,j,k.
    \end{equation}
    A Poisson structure defines a \emph{Poisson bracket} through
    the identity:
    \begin{equation}
    \label{eq:Poisson_def}
    \{f, g\}=\nabla f   J \left( \vec{w} \right)  \nabla g^{T},
    \end{equation}
    where $\nabla f$ is the gradient of $f$. 
    Two functions $f$ and $g$ are said to be \emph{in involution} with respect
    to the Poisson structure $J\left( \vec{w} \right)$ if $\left\{ f,g \right\}=0$.
    We can easily see that $\{w_{i-1},w_{j-1}\}=J_{ij}$.
    A map of the affine coordinates $\varphi\colon \vec{w}\mapsto\vec{w'}$ 
    is a \emph{Poisson map} if it preserves the Poisson structure 
    $J\left( \vec{w} \right)$, i.e. if:
    \begin{equation}
        \label{eq:Poisson1}
        \ud\varphi  J({\bf{w}})  \ud\varphi^T=J({\bf{w'}}),
    \end{equation}
    where $\ud\varphi$ is the  Jacobian matrix of the map $\varphi$. 
    \label{def:poisson}
\end{definition}

Then we have the following characterisation of integrability for
Poisson maps:

\begin{definition}[Liouville integrability \cite{Veselov1991,Bruschietal1991,Maeda1987}]
    An $n$-dimensional Poisson map is integrable if \emph{it possesses 
        $n-r$ functionally independent invariants in involution with respect to this
        Poisson structure}.
    \label{def:liouvilleintegrability}
\end{definition}

\begin{remark}
    A Poisson structure of full rank, i.e. $n=2r$ is invertible.
    The inverse matrix of the matrix $J\left( \vec{w} \right)$, i.e.
    $\Omega\left( \vec{w} \right)=J^{-1}\left( \vec{w} \right)$ is said to be 
    \emph{a symplectic structure}.
    We note that in the symplectic case  we only need $n/2$ 
    invariants in involution to claim integrability.
    \label{rem:symplectic}
\end{remark}

Symplectic structures are quite important in the theory of integrable maps. 
For instance, the classification made in \cite{CapelSahadevan2001}
was carried out assuming of the existence of \emph{linear}
Poisson structure and of two invariants.

A difficult problem is, given a map, to find if there exists a
symplectic structure for which this map is symplectic.
In \cite{ByrnesHaggarQuispel1999} it was proved that there exists a 
\emph{pre-symplectic structure} (a degenerate sympectic structure)
for any $n$-dimensional volume-preserving map possessing $n-2$ invariants.
The rank of the obtained pre-symplectic structure is $n-2$ which implies
that to claim integrability in the sense of Liouville one must be able
to find another invariant.
On the other hand, when the map comes from a \emph{discrete variational principle}, 
i.e. it is \emph{variational}, to find a symplectic structure is easy.
We recall that an even-order recurrence relation \eqref{eq:recrelgen} 
is said to be variational if there exists a function, called \emph{Lagrangian}, 
$L=L\left( w_{k+N},\dots,w_{n} \right)$ such that the recurrence relation
\eqref{eq:recrelgen} is equivalent to the \emph{Euler-Lagrange} equations:
\begin{equation}
    \label{C4E:ELeq}
    \sum_{i=0}^{N} \frac{\partial L}{\partial w_{k}}\left( w_{k+N-i},\dots,w_{k-i} \right)
    =0.
\end{equation}
Here $N=n/2$ in the recurrence \eqref{eq:recrelgen}.
A Lagrangian is called \emph{normal} if
\begin{equation}
    \pdv{L}{w_{k},w_{k+N}} \neq 0.
    \label{eq:normlagr}
\end{equation}
Let $T$ be a shift operator, i.e $T^j(w_{k+i})=w_{k+i+j}$.
Then due to the normality condition the 
\emph{discrete Ostrogradsky transformation} \cite{TranvanderKampQusipel2016}:
\begin{equation}
    \mathcal{O}\colon \vec{w} \to \left( \vec{q},\vec{p} \right)
    \label{eq:discrostr}
\end{equation}
where the new coordinates
$\left( \vec{q},\vec{p} \right)=\left( q_{1},\dots,q_{N},p_{1},\dots,p_{N} \right)$
are defined through the formula:
\begin{subequations}
    \begin{align}
        q_i&=w_{k+i-1}, \quad i=1,\dots,N,
        \label{C4E:OstraTranA} 
        \\
        p_i&=T^{-1}\sum_{j=0}^{N-i}T^{-j}\pdv{L}{w_{j+i}}, \quad
        i=1,\dots,N,
        \label{C4E:OstraTranB}
    \end{align}    
    \label{eq:C4EOstraTran}
\end{subequations}
is well defined and invertible.
Then the following result holds true \cite{Bruschietal1991}:
\begin{lemma}
    \label{L:mapcano}
    The map
    given by $\Phi=\mathcal{O}\circ \varpi\circ\mathcal{O}^{-1}\colon
    \left( \vec{q},\vec{p} \right)\to\left( \vec{q'},\vec{p'} \right) $, 
    where $\varpi$ is the map corresponding to the Euler-Lagrange equations
    \eqref{C4E:ELeq} has the following form
    \begin{subequations}
        \begin{align}
            q_i'&=q_{i+1}, \quad i=1,2,\ldots, N-1 , 
            \label{E:qi}
            \\
            q_N'&=\alpha(\vec{q},p_1),
            \label{E:qN}
            \\
            p_i' &= p_{i+1}+\pdv{\widetilde{L}}{q_{i+1}}(\vec{q},p_1),\ i=1,2,\ldots , N-1, 
            \label{E:pi}
            \\
            p_N'&=\left.\pdv{\widetilde{L}}{\tau}(q,p_1)\right|_{\tau = \alpha\left( \vec{q},p_{1} \right)},
            \label{E:pN}
        \end{align}
        \label{eq:canmap}
    \end{subequations}
    where $\alpha(\vec{q},p_{1})$ is the solution with respect to
    $q^{N'}$ of the equation:
    \begin{equation}
        p_1 =-\pdv{L}{q_{1}}(\vec{q},{q_N'}),
        \label{eq:qNpdef}
    \end{equation}
    and $\widetilde{L}(\vec{q},p_1)=L(\vec{q}, \alpha(\vec{q},p_1))$.
    Moreover, the map \eqref{eq:canmap} is symplectic with respect to
    the canonical symplectic structure:
    \begin{equation}
        \Omega = 
        \begin{pmatrix}
            \mathbb{O}_{N} & \mathbb{I}_{N}
            \\
            -\mathbb{I}_{N} & \mathbb{O}_{N}
        \end{pmatrix}
        \label{eq:Omegacan},
    \end{equation}
    where $\mathbb{O}_{N}$ is the zero $N\times N$ matrix
    and $\mathbb{I}_{N}$ the $N \times N$ identity matrix.
\end{lemma}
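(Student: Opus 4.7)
My plan is to establish each of the formulas \eqref{E:qi}--\eqref{E:pN} in turn by unpacking the Ostrogradsky transformation and the Euler-Lagrange equation, and then to verify symplecticity by a direct computation.

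I would begin with the $q$-dynamics. Since $q_i=w_{k+i-1}$ by \eqref{C4E:OstraTranA} and the map $\varpi$ acts as the shift $k\mapsto k+1$, the relations $q_i'=q_{i+1}$ for $i<N$ in \eqref{E:qi} are immediate; only $q_N'=w_{k+N}$ remains to be identified. Rewriting \eqref{C4E:OstraTranB} explicitly as
\[
p_1=\sum_{\ell=1}^{N}\frac{\partial L}{\partial w_{k+\ell}}\bigl(w_{k+N-\ell},\ldots,w_{k-\ell}\bigr),
\]
I would compare with the Euler-Lagrange equation \eqref{C4E:ELeq} at position $k$: it contains precisely these $N$ terms together with the single extra contribution $(\partial L/\partial w_k)(w_{k+N},\ldots,w_k)$, so it collapses to $p_1=-\partial L/\partial q_1(\vec{q},q_N')$, which is \eqref{eq:qNpdef}. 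The normality condition \eqref{eq:normlagr} then allows the implicit function theorem to solve uniquely for $q_N'=\alpha(\vec{q},p_1)$, producing \eqref{E:qN}.

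For the momentum dynamics I would use the reindexed expression
\[
p_i=\sum_{m=i}^{N}\frac{\partial L}{\partial w_{k+m}}\bigl(w_{k+N-m+i-1},\ldots,w_{k-m+i-1}\bigr),
\]
and apply the shift $T$ to obtain $p_i'$. The $m=i$ term becomes $\partial L/\partial q_{i+1}(\vec{q},q_N')$, while the remaining $m>i$ sum is identical term-by-term to the one defining $p_{i+1}$. This yields \eqref{E:pi}, with the understanding that $\widetilde{L}$ denotes $L(\vec{q},\tau)$ with $\tau$ a dummy variable whose substitution $\tau=\alpha$ is performed only after differentiation. The single-term calculation $p_N'=T(p_N)=\partial L/\partial q_N'(\vec{q},q_N')|_{q_N'=\alpha}$ reproduces \eqref{E:pN} in the same way.

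To verify symplecticity I would compute $\sum_{i=1}^N dp_i'\wedge dq_i'$ directly from the explicit formulas, using $L_{ij}=\partial^2 L/\partial y_i\partial y_j$ in the slot labels $y_j$ of $L$. The intra-$q$ cross terms from $\sum_{i<N} dp_i'\wedge dq_{i+1}$ collapse by the symmetry $L_{ij}=L_{ji}$; the mixed terms $L_{iN}\,dq_N'\wedge dq_{i+1}$ cancel against their counterparts in $dp_N'\wedge dq_N'$; and the residual contribution $-L_{0N}\,dq_N'\wedge dq_1+\sum_{i\geq 1}L_{i,0}\,dq_1\wedge dq_{i+1}$ reproduces $dp_1\wedge dq_1$ exactly, by virtue of the implicit relation $p_1=-\partial L/\partial q_1(\vec{q},q_N')$. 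Thus $\sum dp_i'\wedge dq_i'=\sum dp_i\wedge dq_i$, which is precisely the symplectic condition attached to \eqref{eq:Omegacan}. I expect the principal obstacle to be the careful bookkeeping among the slot indices of $L$, the $w$-indices, and the final $(q,p)$-labels -- in particular disentangling the explicit and implicit $\vec{q}$-dependence hidden inside $\widetilde{L}$ -- rather than any deep technical difficulty.
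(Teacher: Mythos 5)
Your argument is correct, but note that the paper itself offers no proof of Lemma~\ref{L:mapcano}: it is quoted as a known result with a citation to the reference [Bruschi et al.], so there is nothing internal to compare against. What you have written is essentially the standard derivation one would expect there: the identification of \eqref{eq:qNpdef} as the Euler--Lagrange equation \eqref{C4E:ELeq} minus the $i=0$ term, the telescoping $T p_i = p_{i+1} + \partial L/\partial q_{i+1}$ coming directly from \eqref{C4E:OstraTranB}, and the wedge-product verification of $\sum_i \ud p_i'\wedge \ud q_i' = \sum_i \ud p_i\wedge \ud q_i$ with the three families of cancellations you list (symmetry of the Hessian of $L$, the $L_{iN}$ terms against $\ud p_N'\wedge \ud q_N'$, and the residual $j=0$ terms reconstructing $\ud p_1\wedge \ud q_1$ via $p_1=-\partial L/\partial q_1(\vec q, q_N')$); I checked these and they close up as claimed. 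One point in your favour worth keeping explicit: the statement's notation $\partial\widetilde L/\partial q_{i+1}$ in \eqref{E:pi} is ambiguous, since for the composite $\widetilde L(\vec q,p_1)=L(\vec q,\alpha(\vec q,p_1))$ the chain rule would add a spurious term $\partial L/\partial\tau\cdot\partial\alpha/\partial q_{i+1}$; your reading --- differentiate $L(\vec q,\tau)$ in its explicit slot first and substitute $\tau=\alpha$ afterwards, consistently with how \eqref{E:pN} is written --- is the one under which the lemma is true, and flagging it is not pedantry but a necessary disambiguation.
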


Lemma \ref{L:mapcano} has the following corollary:
\begin{corollary}
\label{C:Poisson}
    The Euler--Lagrange equations \eqref{C4E:ELeq} admit the
    following non-degenerate Poisson bracket:
    \begin{equation}
        J\left( \vec{w} \right) = \ud \mathcal{O}^{-1}\Omega^{-1}(\ud\mathcal{O}^{-1})^{T},
        \label{eq:canEL}
    \end{equation}
    where the differential of the Ostrogradsky transformation $\mathcal{O}$
    must be evaluated on the original coordinates.
    \label{cor:costrpb}
\end{corollary}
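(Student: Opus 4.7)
The plan is to interpret the matrix $J(\vec{w})$ of \eqref{eq:canEL} as the pushforward of the canonical Poisson bivector $\Omega^{-1}$ under the inverse Ostrogradsky transformation $\mathcal{O}^{-1}$, and to transfer the symplecticity of $\Phi$ established in Lemma \ref{L:mapcano} back to the original map $\varpi$ via the factorization $\varpi = \mathcal{O}^{-1}\circ\Phi\circ\mathcal{O}$ implicit in the statement of that lemma.

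I would first check the intrinsic properties of $J$. Non-degeneracy is immediate: the canonical matrix $\Omega$ in \eqref{eq:Omegacan} is invertible, and the normality condition \eqref{eq:normlagr} together with the implicit function theorem guarantees that $\mathcal{O}$ has an invertible Jacobian $\ud\mathcal{O}$, so $J$ is invertible of full rank $n=2N$. For the Jacobi identity \eqref{eq:JacoIden}, note that the constant-coefficient bracket $\{f,g\}_{\Omega^{-1}} = \nabla f\cdot \Omega^{-1}\cdot \nabla g^T$ on $(\vec{q},\vec{p})$-space satisfies Jacobi trivially because every entry of $\Omega^{-1}$ is constant; and the bracket defined by $J$ on $\vec{w}$-space satisfies $\{F,G\}_J(\vec{w}) = \{F\circ\mathcal{O}^{-1},G\circ\mathcal{O}^{-1}\}_{\Omega^{-1}}(\mathcal{O}(\vec{w}))$, so it inherits Jacobi as a mere change of variables.

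It remains to show that $\varpi$ preserves $J$ in the sense of \eqref{eq:Poisson1}. Applying the chain rule to $\varpi = \mathcal{O}^{-1}\circ\Phi\circ\mathcal{O}$ gives
\begin{equation*}
\ud\varpi\cdot J(\vec{w})\cdot \ud\varpi^T
= \ud\mathcal{O}^{-1}\cdot \ud\Phi\cdot \bigl[\ud\mathcal{O}\cdot \ud\mathcal{O}^{-1}\bigr]\cdot \Omega^{-1}\cdot \bigl[(\ud\mathcal{O}^{-1})^T\cdot \ud\mathcal{O}^T\bigr]\cdot \ud\Phi^T\cdot (\ud\mathcal{O}^{-1})^T,
\end{equation*}
in which the bracketed pairs evaluate to $\Id$ when the base points are correctly aligned along the chain $\vec{w}\mapsto\mathcal{O}(\vec{w})\mapsto\Phi(\mathcal{O}(\vec{w}))=\mathcal{O}(\varpi(\vec{w}))$. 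The symplecticity of $\Phi$ asserted in Lemma \ref{L:mapcano}, equivalently $\ud\Phi\cdot\Omega^{-1}\cdot\ud\Phi^T = \Omega^{-1}$, then collapses the middle block to $\Omega^{-1}$, leaving $\ud\mathcal{O}^{-1}\cdot \Omega^{-1}\cdot (\ud\mathcal{O}^{-1})^T = J(\varpi(\vec{w}))$ as required. The principal obstacle is purely bookkeeping --- tracking the point at which each Jacobian of $\mathcal{O}$, $\mathcal{O}^{-1}$, and $\Phi$ is evaluated so that the telescoping is unambiguous, and stating explicitly the equivalence between the covariant form $(\ud\Phi)^T\Omega(\ud\Phi)=\Omega$ and the contravariant form used above --- rather than any genuine analytic difficulty.
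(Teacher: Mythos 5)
Your proposal is correct and follows essentially the same route the paper intends: the corollary is the standard conjugation argument, transporting the canonical structure $\Omega$ of Lemma \ref{L:mapcano} back through $\mathcal{O}^{-1}$, with non-degeneracy coming from the invertibility of $\ud\mathcal{O}$ (guaranteed by the normality condition \eqref{eq:normlagr}) and the Jacobi identity inherited under the change of variables. The paper states the corollary without writing out this computation, so your careful tracking of base points and of the covariant/contravariant forms of symplecticity simply makes explicit what the paper leaves implicit.
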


Therefore we have that corollary \ref{cor:costrpb} allows us to construct 
a non-degenerate Poisson structure, and hence a symplectic structure, 
\emph{for every variational map}.

Lagrangians for $2N$-order recurrence relations can be found following
\cite{HydonMansfield2004} or \cite{Gubbiotti_dcov} for $N>1$.
The method presented in \cite{Gubbiotti_dcov} allows also to disprove
the existence of a Lagragian for a given $2N$-order recurrence relation
for $N>1$.

Moreover, bi-rational maps possess another definition of integrability:
the \emph{low growth condition} \cite{Veselov1992,FalquiViallet1993,BellonViallet1999}.
To be more precise we say that an $n$-dimensional \emph{bi-rational}
map is integrable if \emph{the degree of growth of the iterated
map $\varphi^{k}$ is polynomial with respect to the initial conditions
$\pvec{x_{0}}$}.
Therefore we have the following characterisation of integrability:
\begin{definition}[Algebraic entropy \cite{BellonViallet1999}]
    An $n$-dimensional \emph{bi-rational}
    map is \emph{integrable in the sense of the algebraic entropy}
    if the following limit
    \begin{equation}
        \varepsilon = \lim_{k\to\infty}\frac{1}{k}\log \deg_{\pvec{x_{0}}}\varphi^{k},
        \label{eq:algentdef}
    \end{equation}
    \label{def:algent}
    called the \emph{algebraic entropy} is zero for every initial
    condition $\pvec{x_{0}}\in\Pj^{n}$.
\end{definition}

Algebraic entropy is an \emph{invariant} of bi-rational maps, meaning
that its value is unchanged up to bi-rational equivalence.
Practically algebraic entropy is a measure of the \emph{complexity} of
a map, analogous to the one introduced by Arnol'd \cite{Arnold1990}
for diffeomorphisms.
In this sense growth is given by computing the number of intersections of the
successive images of a straight line with a generic hyperplane in
complex projective space \cite{Veselov1992}.

The value of the degree of the iterates of the map is conditioned by its
\emph{singularity structure}. 
Some hypersurfaces are blown down by the map.
If one of the successive images of these hypersurfaces coincide with 
a singular variety, there is a drop in the degree 
\cite{BellonViallet1999,Viallet2015,Takenawa2001JPhyA}.
Therefore, from a heuristic point of view we can say that
the singularity makes the entropy. 
This actually also applies to non-autonomous cases
like the discrete Painlev\'e equations \cite{Sakai2001}.

In principle, the definition of algebraic entropy in equation
\eqref{eq:algentdef} requires us to compute all the iterates of
a bi-rational map $\varphi$ to obtain the sequence 
\begin{equation}
    d_{k}=\deg_{\pvec{x_{0}}}\varphi^{k},
    \quad
    k\in\N.
    \label{eq:degrees}
\end{equation}
Fortunately, for the majority of applications the form
of the sequence can be inferred by using generating
functions \cite{Lando2003}:
\begin{equation}
    g\left( z \right) = \sum_{n=0}^{\infty} d_{k}z^{k}.
\end{equation}
\label{eq:genfunc}
A generating function is a predictive tool which
can be used to test the successive members of a finite sequence.
It follows that the algebraic entropy is given by the
logarithm of the smallest pole of the generating function,
see \cite{GubbiottiASIDE16,GrammaticosHalburdRamaniViallet2009}.

Several results are known about the relationship of the
above definitions of integrability.
First of all, the low growth condition means that the complexity of
the map is very low, and it is known that invariants help in reducing
the complexity of a map. 
Indeed the growth of a map possessing invariants cannot be generic
since the motion is constrained to take place on the intersection 
of hypersurfaces defined by the invariants.
However, the drop in complexity must be big enough to reduce the
growth to a polynomial one.
On the other hand it is known that the existence of invariants
can give some bounds on the growth of bi-rational maps.
Indeed, it is known that the orbits of superintegrable maps with 
rational invariant are confined to \emph{elliptic curves} and the 
growth is at most \emph{quadratic} \cite{Bellon1999,Gizatullin1980}.
In low dimension some explicit results on the growth of bi-rational
maps are known.
For maps in $\Pj^{2}$, it was proved in \cite{DillerFavre2001} that 
the growth can be only bounded, linear, quadratic or exponential.
Linear cases are trivially integrable in the sense of invariants.
We note that for polynomial maps in $\Cp^{2}$, it was already known from 
\cite{Veselov1992} that the growth can be only linear or exponential.
It is known that QRT mappings and other maps with invariants in $\Pj^{2}$
possess quadratic growth \cite{Duistermaat2011book}, so the two notions
are actually equivalent for large class of integrable systems.

\subsection{Duality}

Now we  discuss briefly the concept of \emph{duality} for
rational maps, which was introduced in \cite{QuispelCapelRoberts2005}.
Let us assume that our map $\varphi$ possesses $L$ invariants, 
i.e.$I_{j}$ for $j\in\left\{ 1,\dots,L \right\}$. 
Then we can form the linear combination:
\begin{equation}
    H = \alpha_{1} I_{1} + \dots+\alpha_{L} I_{L}.
    \label{eq:hdef}
\end{equation}
%
Being a function of invariants it follows that $H$ defined
by \eqref{eq:hdef} is itself an invariant of the map.

\begin{remark}
    We note that in principle more general combinations of
    invariants can be considered:
    \begin{equation}
        H = P_{d}\left(I_{1},I_{2},\dots,I_{L}\right)
        \label{eq:genHcomb}
    \end{equation}
    where $P_{d}$ is a homogeneous polynomial of total degree $d$
    in $L$ variables.
    Again even in this generalized case $H$ defined
    by \eqref{eq:genHcomb} is an invariant of the map.
    However, in this paper we won't consider this case, following
    the original definition of \cite{QuispelCapelRoberts2005}.
    \label{rem:gencomb}
\end{remark}

For an unspecified
recurrence relation
\begin{equation}
    \left[ x_{1}:x_{2}:\dots:x_{n+1} \right]
    \mapsto
    \left[ x_{1}':x_{2}':\dots:x_{n+1}' \right]
    = \left[ x_{1}':x_{1}:\dots:x_{n+1} \right]
\end{equation}
we can write down the invariant condition for $H$ \eqref{eq:hdef}:
\begin{equation}
    \widehat{H}(x_{1}',\pvec{x})=H\left(\pvec{x'}  \right) - H\left( \pvec{x} \right)
    = 0.
    \label{eq:hinvc}
\end{equation}
Since we know that $\pvec{x'}=\varphi\left( \pvec{x} \right)$ is a solution
of \eqref{eq:hinvc} we have the following factorization:
\begin{equation}
    \widehat{H}(x_{1}',\pvec{x})
    = A\left( x_{1}',\pvec{x} \right)
    B\left( x_{1}',\pvec{x} \right).
    \label{eq:hfact}
\end{equation}
We can assume without loss of generality that the map
$\varphi$ corresponds to the annihilation of $A$ in \eqref{eq:hfact}.
Now since $\deg_{x_{1}'}\widehat{H}=\deg_{x_{1}}H$ and 
$\deg_{x_{n}}\widehat{H}=\deg_{x_{n}}H$
we have that if if $\deg_{x_{1}} H,\deg_{x_{n}} H >1$ the factor
$B$ in \eqref{eq:hfact} is non constant\footnote{We remark that this
    assertion is possible because we are assuming that all the invariants
    are non-degenerate. It is easy to see that degenerate invariants can
violate this property.}.
In general, since the map $\varphi$ is bi-rational, we have the following
equalities:
\begin{subequations}
    \begin{align}
        \deg B_{x_{1}'}&=\deg_{x_{1}'}\widehat{H}-\deg_{x_{1}'}A=\deg_{x_{1}}H-1,
        \\
        \deg B_{x_{n}}&=\deg_{x_{n}}\widehat{H}-\deg_{x_{n}}A=\deg_{x_{n}}H-1.
    \end{align}
    \label{eq:degABH}
\end{subequations}
Therefore we have that, in general, if $\deg_{x_{1}}H,\deg_{x_{n}}H>2$, 
the annihilation of $B$ does not define a bi-rational map, but an algebraic one.
However when  $\deg_{x_{1}} H,\deg_{x_{n}} H =2$
the annihilation of $B$ defines a bi-rational projective map.
We call this map the \emph{dual map} and we denote it by $\varphi^{\vee}$.

\begin{remark}
    We note that in principle for $\deg_{x_{1}}H=\deg_{x_{n}}H=d>2$, 
    more general factorizations can be considered:
    \begin{equation}
        \widehat{H}\left( x_{1}',\pvec{x} \right)
        = \prod_{i=1}^{d} A_{i}\left( x_{1}',\pvec{x} \right),
        \label{eq:genHfact}
    \end{equation}
    but in this paper we won't consider this case.
    \label{rem:facthigh}
\end{remark}

Now assume that the invariants (and hence the map $\varphi$)
depends on some \emph{arbitrary constants} 
$I_{i}=I_{i}\left( \pvec{x};a_{i}\right)$, for $i=1,\dots,K$.
Choosing some of the $a_{i}$ in  such a way that
there remains $M$ arbitrary constants and such that for
a subset $a_{i_{k}}$ we can write
equation \eqref{eq:hdef} in the following way:
\begin{equation}
    H = a_{i_{1}} J_{1} + a_{i_{2}} J_{2}+\dots+a_{i_{K}}J_{a_{i_{K}}},
    \label{eq:hJdef}
\end{equation}
where $J_{i}=J_{i}\left( \pvec{x} \right)$, $i=1,2,\dots,K$
are new functions.
Then using the factorization \eqref{eq:hfact} we have that
the $J_{i}$ functions are invariants for the dual maps.

\begin{remark}
    It is clear from equation \eqref{eq:hJdef} that even though
    the dual map is naturally equipped with some invariants, 
    it is not \emph{necessarily} equipped with a sufficient
    number of invariants to claim integrability.
    In fact there exists examples of dual maps with any possible
    behaviour, integrable, superintegrable and non-integrable
    \cite{JoshiViallet2017,GJTV_sanya}.
    \label{eq:dualmapints}
\end{remark}

\section{Derivation of the class of 4D maps}
\label{sec:theproblem}

In this Section we explain how we derive the class of
4D maps with two invariants we are going to present in Section
\ref{sec:theclass}.

Our starting points are the maps corresponding to the autonomous 
\ref{dPI2} and the $\dP_\text{II}^{2}$ equations and their invariants
as presented in \cite{JoshiViallet2017}.
These two are maps of $\Pj^{4}$ into itself with
coordinates $\left[ x:y:z:u:t \right]$.
Their components are given by:
\begin{equation}
   \begin{aligned}
        x'  &=  
        \begin{aligned}[t]
        &- a y \left({x}^{2}+ {y}^{2}+{z}^{2}  +2 yz+2 xy
        +xz+zu \right) 
        \\
        &\quad
        - b  t y  \left( y+z+x \right)  -c   y{t}^{2}+ d  {t}^{3},
        \end{aligned}
        \\
        y' &=  a  y{x}^{2}, \quad z'  = a x {y}^{2}, \quad
        u' = a  xyz, \quad
        t'  = a  xyt .
\end{aligned}
    \tag{$\dP_\text{I}^{2}$}
    \label{dPI2}
\end{equation}
and by:
\begin{equation}
\begin{aligned}
 x'  &=  d  {t}^{5} - a   \left( t-y \right)  \left( t+y \right)  \left( u{t}^{2}-y{z}^{2}-u{z}^{2}-2 yxz-{x}^{2}y \right)  -c  y {t}^{4}  \\
 &\quad 
  - b   {t}^{2} \left( t-y \right)  \left( t+y \right)  \left( z+x \right), 
\\
 y' &= ax \left( {t}^{2}-{y}^{2} \right)  \left( {t}^{2}-{x}^{2} \right) , 
\quad z'  = ay \left( {t}^{2}-{y}^{2} \right)  \left( {t}^{2}-{x}^{2} \right) , 
\\ 
 u' &= az \left( {t}^{2}-{y}^{2} \right)  \left( {t}^{2}-{x}^{2} \right) , 
\quad
  t'  = at \left( {t}^{2}-{y}^{2} \right)  \left( {t}^{2}-{x}^{2} \right).
\end{aligned}
    \tag{$\dP_\text{II}^{2}$}
    \label{dPII2}
\end{equation}

It can be checked that the map \ref{dPI2} has two invariants 
$I^{\scriptscriptstyle (\rm I) }_4$  and $I^{\scriptscriptstyle (\rm I) }_5$ 
which are:
\begin{subequations}
    \begin{align}
        \label{II4}
        t^4 I^{\scriptscriptstyle (\rm I) }_4 &
        \begin{aligned}[t]
        &= a   
        yz \left( -{y}^{2}-2\,yz-xy-{z}^{2}-zu+xu \right)  
        \\
        &\quad
        -  b  tyz \left( z+y \right)
        -c   yz{t}^{2}+  d {t}^{3} \left( z+y \right),
        \end{aligned}
        \\
        t^5 I^{\scriptscriptstyle (I)}_5&  
        \begin{aligned}[t]
        &=a  y z \left( zu+xy+{y}^{2}+2\,yz+{z}^{2} \right)  \left( z+u+y+x \right)
        \\ 
        &
        \quad
        +c  yz \left( z+u+y +x \right) {t}^{2}  
        -d \left( zu+xy+{y}^{2}+2\,yz+{z}^{2} \right) {t}^{3}
        \\
        &\quad
        +b  yz \left( y+z+x \right)  
        \left( u+y+z \right) t, 
        \end{aligned}
        \label{II5}
    \end{align}
    \label{eq:dPI2inv}
\end{subequations}
while the map \ref{dPII2} possesses two invariants 
$I^{\scriptscriptstyle (\rm II) }_6$  and $I^{\scriptscriptstyle (\rm II) }_8$ 
given by
\begin{subequations}
    \begin{align}
        \label{III6}
        t^6 I^{\scriptscriptstyle (\rm II) }_6& 
        \begin{aligned}[t]
            &=   a  \left( t-z \right)  \left( t+z \right)  \left( t-y \right)  
            \left( t+y \right)  \left( ux-uz-xy-yz \right)  
            \\ 
            & \quad 
            -b  {t}^{2} \left( {z}^{2}{t}^{2}+{t}^{2}{y}^{2}-{z}^{2}{y}^{2} \right)  
            -c  {t}^{4}yz+ d   {t}^{5} \left( z+y \right),
        \end{aligned}
        \\
        \label{III8}
        t^8 I^{\scriptscriptstyle (\rm II) }_8&  
        \begin{aligned}[t]
            &= a   \bigl[ 
             \left( {u}^{2}+{z}^{2}+{y}^{2}+{x}^{2} \right) {t}^{6} -{z}^{2}{y}^{2} \left( uz+xy+yz \right) ^{2}
            \\ 
            &\quad
             - ( 2 yu{z}^{2}+2 uzxy+{x}^{2}{z}^{2}+2 xz{y}^{2}+2 {x}^{2}{y}^{2}
            \\
            &\quad+{u}^{2}{y}^{2}+2 {z}^{2}{y}^{2}+2 {u}^{2}{z}^{2}) {t}^{4} 
            \\ 
            &\quad
            + ( 2 {x}^{2}{y}^{2}{z}^{2}+2 u{y}^{3}{z}^{2}+2 x{y}^{2}{z}^{3}+2 yu{z}^{4}
            \\
            &\quad\quad
            +{z}^{2}{y}^{4}+{y}^{2}{z}^{4}+2 {u}^{2}{y}^{2}{z}^{2}
            +{x}^{2}{y}^{4} 
            \\ 
            &\quad\quad
            +2 ux{y}^{3}z+2 uxy{z}^{3}+2 xz{y}^{4}+{z}^{4}{u}^{2} ) {t}^{2}\bigr]
            \\ 
            &\quad
            + b   {t}^{2} \left( t-z \right)  \left( t+z \right)  \left( t-y \right)  
            \left( t+y \right)  \left( z+x \right)  \left( u+y \right) 
            \\ 
            &\quad
            + c  {t}^{4} \left( xz{t}^{2}-{z}^{2}{y}^{2}+yu{t}^{2}-yu{z}^{2}
            -xz{y}^{2} \right)  
            \\ 
            &
            \quad
            - d  {t}^{5} \left( x{t}^{2}+z{t}^{2}-z{y}^{2}
            -x{y}^{2}-u{z}^{2}+u{t}^{2}-y{z}^{2}+y{t}^{2} \right).
        \end{aligned}
    \end{align}
    \label{eq:dPII2inv}
\end{subequations}

The invariants of the maps \ref{dPI2} and \ref{dPII2} have
the following properties:
\begin{enumerate}[label={Property \Alph*:},font={\bfseries},leftmargin=78pt]
    \item The invariants are symmetric in the sense of definition
        \ref{def:symm}.
    \item The lowest order invariants \eqref{II4} and \eqref{III6} 
        have degree pattern $\left( 1,3,3,1 \right)$ and are
        particular instances of the homogeneous polynomial in
        $\Cp\left[ x,y,z,u,t \right]$:
        \begin{equation}
            \begin{aligned}
                t^{6}I_{\text{low}} &= 
                t^5 (y+z) s_{1}-t^4 (u x-u z-x y) s_{2}+s_{3} t^4 y z
                \\
                &\quad
                +t^4 (y^2+z^2) s_{4}+t^3 y z (y+z) s_{5}
                \\
                &\quad
                +t^2 (y^2+z^2) (u x-u z-x y) s_{6}
                \\
                &
                \quad
                -t^2 y z (u x-u z-x y) s_{7}+s_{8} t^2 y^2 z^2+t^2 y z (y^2+z^2) s_{9}
                \\
                &\quad
                -y^2 z^2 (u x-u z-x y) s_{10}+s_{11} y^3 z^3,
            \end{aligned}
            \label{eq:Ilow}
        \end{equation}
        depending parametrically on 11 coefficients, namely 
        $s_{i}$, $i=1,\dots,11$. 
    \item The highest order invariants \eqref{II5} and \eqref{III8} 
        have degree pattern $\left( 2,4,4,2 \right)$.
        The most general homogeneous polynomial in
        $\Cp\left[ x,y,z,u,t \right]$ depends parametrically on 1820
        coefficients. Taking into account the symmetry with respect to
        the involution \eqref{eq:symmcond} the number of coefficients 121.
        Since, one of this coefficients is just an additional constant then
        we can lower the number of independent coefficient to 120.
        We denote this invariant by $\Ihigh$, but 
        we do not present the general form of this
        polynomial here, since it will be too cumbersome to write down.
\end{enumerate}

Based on the above consideration it is natural to address the following
problem:

\begin{problem}
    Find all the bi-rational maps $\varphi\colon\Pj^{4}\to\Pj^{4}$
    and their dual maps $\varphi^{\vee}\colon\Pj^{4}\to\Pj^{4}$ 
    having two non-degenerate, 
    functionally independent invariants with properties A, B and C.
    \label{prob:abc}
\end{problem}

Solving this problem amount to obtain a list of equations which
are expected to behave like the two fourth-order Painlev\'e equations
\ref{dPI2} and \ref{dPII2}.
Before going to the solution of this problem, let us remark the
following general result on the dual map of a map with two invariants
possessing properties A, B and C:

\begin{lemma}
    Assume that a map $\varphi\colon\Pj^{4}\to\Pj^{4}$ possesses two
    invariants with properties A, B and C.
    Then we have the map $\varphi$ has degree pattern 
    $\dpat \varphi= \left( 2,3,2,1 \right)$ 
    and the maximal degree pattern of  the dual  
    $\varphi^{\vee}\colon\Pj^{4}\to\Pj^{4}$ is 
    $\dpat \varphi^{\vee}= \left( 2,1,2,1 \right)$.
    \label{lem:invstruct}
\end{lemma}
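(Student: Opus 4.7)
The plan is to extract $\varphi$ directly from the $\Ilow$-invariance (using the special structure forced by Properties~A and~B), and then exploit the factorisation machinery of Section~\ref{sec:setting} to control $\varphi^{\vee}$ via Property~C.

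First I would parametrise $\Ilow$. Since $\deg_{x_{1}}\Ilow=1$, write $\Ilow(a,b,c,d)=aP(b,c,d)+Q(b,c,d)$. Matching with \eqref{eq:Ilow} and imposing Property~A forces the specific form $P(b,c,d)=(d-b)\Phi(b,c)$ and $Q(b,c,d)=-cd\,\Phi(b,c)+D(b,c)$, with $\Phi$ symmetric in $b,c$ of bidegree $(2,2)$ and $D$ symmetric of bidegree $(3,3)$. The invariance $\Ilow(X',X,Y,Z)=\Ilow(X,Y,Z,U)$ is then linear in $X'$ and yields $X'=N/D$ with $D=(Z-X)\Phi(X,Y)$. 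The symmetries $\Phi(X,Y)=\Phi(Y,X)$ and $D(X,Y)=D(Y,X)$ force $N|_{Z=X}=0$, so $(Z-X)\mid N$, and the reduced numerator $x':=N/(Z-X)$ decomposes as
\begin{equation*}
x'=-U\Phi(Y,Z)\;+\;Y\,\frac{Z\Phi(X,Y)-X\Phi(Y,Z)}{Z-X}\;+\;\frac{D(Y,Z)-D(X,Y)}{Z-X}.
\end{equation*}
A direct term-by-term count on each piece yields degree patterns $(0,2,2,1)$, $(1,3,1,0)$ and $(2,3,2,0)$ respectively; their coordinate-wise maximum is $(2,3,2,1)$. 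Since the remaining projective components of $\varphi$ come from the recurrence shift and contribute only smaller individual degrees, we conclude $\dpat\varphi=(2,3,2,1)$.

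For the dual map I would apply the factorisation $\widehat{H}=A\cdot B$ of Section~\ref{sec:setting} with $H=\Ihigh$. One computes $\dpat\widehat{H}=(2,4,4,4,2)$ in $(X',X,Y,Z,U)$. The factor $A$ defining $\varphi$ coincides (up to a scalar) with the linear-in-$X'$ polynomial $\Phi(X,Y)X'-x'$ derived above, which has degree pattern $(1,2,3,2,1)$; coordinate-wise subtraction then gives $\dpat B=(1,2,1,2,1)$, and recovering $\varphi^{\vee}$ by solving $B=0$ for $X'$ yields the stated bound $\dpat\varphi^{\vee}\le(2,1,2,1)$.

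The hard part will be the last factorisation step: one must verify that $\Phi(X,Y)X'-x'$ divides $\widehat{H}$ \emph{as a polynomial} rather than only as a rational function, so that the coordinate-wise subtraction is sharp. This is guaranteed at the rational level by $\varphi$ preserving both invariants (so the $\Ilow$-root is a root of the $\Ihigh$-quadratic in $X'$), but promoting it to a polynomial factorisation requires ruling out extra common factors that could be absorbed into $B$ and inflate its degrees. Once this is settled, the degree-pattern bound follows mechanically from the subtraction, and the asymmetry $(2,1,2,1)$ versus $(2,3,2,1)$ reflects the gap between $\deg_{Y}\widehat{H}=4$ and $\deg_{Y}A=3$.
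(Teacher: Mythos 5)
Your argument is correct and takes essentially the same route as the paper: the paper's proof is exactly the degree bookkeeping for the factorisations $\Ilow(\pvec{x'})-\Ilow(\pvec{x})=(x-z)A(x',\pvec{x})$ and $\widehat{H}=A\cdot B$, recorded in two tables, and your explicit parametrisation of $\Ilow$, the symmetry argument for the $(z-x)$ cancellation, and the term-by-term count of the reduced numerator simply make that ``direct computation'' concrete. The worry in your final paragraph is not needed for the statement as claimed: since $\deg_{v}(AB)=\deg_{v}A+\deg_{v}B$ for polynomials over an integral domain, the coordinate-wise subtraction already yields the upper bound $\dpat B\le(1,2,1,2,1)$, and the lemma only asserts this \emph{maximal} pattern for the dual --- the paper itself merely remarks that the bound is attained when $\widehat{H}$ has no factor depending solely on $\pvec{x}$.
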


\begin{proof}
    By direct computation is it possible to check that if an invariant
    $\Ilow$ has the form \eqref{eq:Ilow} then the invariant
    condition \eqref{eq:fintdef} implies the following factorisation:
    \begin{equation}
        \Ilow\left(\pvec{x'}  \right) - \Ilow\left( \pvec{x} \right)
        = (x-z) A\left(x', \pvec{x} \right).
        \label{eq:I1fact}
    \end{equation}
    Equation \eqref{eq:I1fact} means that we have the
    following degree distribution:
    \begin{equation}
        \begin{array}{cccccc}
            & \deg_{x'} & \deg_{x} & \deg_{y} & \deg_z & \deg_u
            \\
            \varphi^{*}\left( \Ilow \right) & 1 & 3 & 3 & 1 & 0
            \\
            \Ilow & 0 & 1 & 3 & 3 & 1
            \\
            A & 1 & 2 & 3 & 2 & 1 
        \end{array}
        \label{eq:dconfI1}
    \end{equation}
    The second part of the statement comes from an analogous consideration 
    applied to equation \eqref{eq:hfact}. 
    Since the degree pattern of $A$ is fixed, the degree pattern 
    of $B$ is maximal when there are no factors depending only on $\pvec{x}$. 
    Under this assumptions we find the following distribution of the degrees:
    \begin{equation}
        \begin{array}{cccccc}
            & \deg_{x'} & \deg_{x} & \deg_{y} & \deg_z & \deg_u
            \\
            \varphi^{*}\left( H \right) & 2 & 4 & 4 & 2 & 0
            \\
            H & 0 & 2 & 4 & 4 & 2
            \\
            A & 1 & 2 & 3 & 2 & 1 
            \\
            B & 1 & 2 & 1 & 2 & 1 
        \end{array}
        \label{eq:dconfH}
    \end{equation}
    This ends the proof.
\end{proof}

\begin{corollary}
    Bi-rational maps possessing two invariants satisfying
    properties A, B, and C in general are not \emph{self-dual}.
    \label{cor:noself}
\end{corollary}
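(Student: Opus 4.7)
The proof will be an immediate consequence of Lemma \ref{lem:invstruct}, reducing the statement to a comparison of degree patterns. The plan is as follows.

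First I would recall that a self-dual map is, by definition, one for which $\varphi^{\vee}$ coincides with $\varphi$ (up to relabelling of components, or, more weakly, up to bi-rational equivalence). In either interpretation, self-duality forces an equality of degree patterns, $\dpat \varphi^{\vee} = \dpat \varphi$, since the degree pattern is clearly invariant under the natural equivalences of bi-rational maps on $\Pj^{4}$.

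Next I would invoke Lemma \ref{lem:invstruct}, which gives the two explicit degree patterns
\begin{equation*}
\dpat \varphi = (2,3,2,1), \qquad \dpat \varphi^{\vee} \le (2,1,2,1),
\end{equation*}
where the inequality is componentwise and ``$\le$'' records that $(2,1,2,1)$ is the maximal pattern attained by $\varphi^{\vee}$. In the $y$-component the entry of $\dpat \varphi$ is $3$, whereas the $y$-component of $\dpat \varphi^{\vee}$ is at most $1$. These two values cannot coincide, so $\varphi^{\vee} \neq \varphi$.

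The only delicate point, which explains the qualifier \emph{in general}, is that I am implicitly assuming no extra cancellations occur in the factorisation \eqref{eq:hfact} beyond those already used in the proof of Lemma \ref{lem:invstruct}; exceptional coincidences (for instance, an additional factor depending only on $\pvec{x}$ splitting off from $B$) could in principle lower $\dpat \varphi$ further and allow equality. Outside such degenerate configurations, the degree-pattern mismatch is strict and the corollary follows. There is no real obstacle: everything is already packaged in Lemma \ref{lem:invstruct}, and the proof amounts to one line comparing the middle entries of two quadruples.
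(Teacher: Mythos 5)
Your argument is correct and is exactly the one the paper intends: the corollary is stated immediately after Lemma \ref{lem:invstruct} with no separate proof, precisely because the mismatch between $\dpat\varphi=(2,3,2,1)$ and the maximal dual pattern $(2,1,2,1)$ (in particular the second entry) is meant to make it an immediate consequence. Your additional remark on the qualifier \emph{in general} is consistent with the paper's footnoted caveats about degenerate factorisations.
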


We sketch now the procedure we used to solve problem \ref{prob:abc}.
We underline that this procedure is based on the one proposed in 
\cite{JoshiViallet2017} to find bi-rational maps with invariants
of assigned degree pattern.

\begin{enumerate}
    \item Find the value of $x'$ from \eqref{eq:I1fact}
        where $I_\text{low}$ is given by equation \eqref{eq:Ilow}.
    \item Substitute the obtained form of $x'$ into the invariant
        condition \eqref{eq:fintdef} for $I_\text{high}$.
        Geometrically this describes the intersection of the two 
        hypersurfaces given by $I_{\text{low}}=I_{\text{low}}^{(0)}$ and
        $I_\text{high}=I_\text{hight}^{(0)}$, where
        $I_\text{low}^{(0)}$ and $I_\text{hight}^{(0)}$ are arbitrary constants.
    \item We can take coefficients with respect to the independent 
        variables.
        This yield a system of nonlinear homogeneous equations.
        We put this system in a collection of systems that we call.
    \item We convert this system to a set of simpler systems
        by solving iteratively all the monomial equations of each 
        system.
        At each stage we exclude the systems originating invariants
        contraddicting properties A, B and C.
    \item This yields 117 different smaller systems.
    \item Solving these systems we found 25 solutions respecting
        the properties A, B and C.
\end{enumerate}

Through a degeneration scheme the 25 solutions we obtain 
can be cast into six different maps along with their duals. 
We proved the following:

\begin{theorem}
    The solutions of problem \ref{prob:abc}, up to degeneration and 
    identification of the free parameters, is given by \emph{six pairs}
    of main/dual maps which we denote by (P.x) with x small roman
    number for the main maps and by (Q.x) for the dual map.
    \label{thm:class}
\end{theorem}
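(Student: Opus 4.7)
The plan is to exploit Lemma \ref{lem:invstruct} to parametrise the problem by the coefficients of two ansätze for $\Ilow$ and $\Ihigh$, and then reduce the invariance conditions to a (large but tractable) polynomial system in those coefficients. First I would take $\Ilow$ in the general symmetric form \eqref{eq:Ilow}, depending on the $11$ parameters $s_{1},\dots,s_{11}$, and solve the factorisation \eqref{eq:I1fact} for the image variable $x'$ as a rational function of $(x,y,z,u)$. This determines the map $\varphi$ explicitly up to those $11$ parameters, with the degree pattern $(2,3,2,1)$ predicted by the lemma.

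Next I would impose the invariance condition $\varphi^{*}(\Ihigh)=\Ihigh$ using the general symmetric ansatz of degree pattern $(2,4,4,2)$, which carries $120$ free coefficients by Property C. Clearing denominators and collecting coefficients of the independent monomials in $(x,y,z,u)$ yields a homogeneous polynomial system in the $s_{i}$ together with the $\Ihigh$-coefficients. The reduction strategy is the standard branching one for large Gröbner-type ideals: at each stage locate an equation that factors into simpler expressions (ideally a single monomial, so that one factor must vanish), split the problem into the corresponding sub-cases, eliminate the vanished unknowns, and propagate. Any branch whose surviving $\Ilow$ becomes degenerate in the sense of \eqref{eq:nondegcond}, or functionally dependent on $\Ihigh$, or loses the symmetry of Property A, is discarded on the spot.

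The main obstacle is the combinatorial explosion inherent in the branching: as signalled in the sketch preceding the theorem, this process terminates with $117$ subsystems that must each be solved, and the real content of the proof is the bookkeeping that guarantees no admissible branch is missed while every non-admissible one is pruned against Properties A, B and C. I would carry this out with a computer algebra system, keeping a log of which property eliminates which branch, and recording the surviving parameter relations. After all branches are resolved, a finite list of $25$ raw solutions emerges.

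In the final consolidation step I would apply affine reparametrisations of $(x,y,z,u)$, together with linear redefinitions of the free parameters $(a,b,c,d)$ and the degeneration limits that send some parameter to a boundary value, in order to identify equivalent solutions. What survives these identifications is the claimed collection of six inequivalent main maps; for each of them, the factorisation \eqref{eq:hfact} following from Lemma \ref{lem:invstruct} immediately produces the associated dual map via the construction of Section \ref{sec:setting}, yielding the six pairs (P.x)/(Q.x) stated in the theorem.
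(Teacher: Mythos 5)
Your proposal follows essentially the same route as the paper: solve the factorisation \eqref{eq:I1fact} for $x'$ in terms of the $11$ parameters of $\Ilow$, impose the invariance of the $120$-parameter symmetric ansatz for $\Ihigh$, reduce the resulting polynomial system by iteratively splitting on monomial equations while pruning branches against Properties A, B and C (arriving at the same $117$ subsystems and $25$ raw solutions), and finally identify the survivors up to degeneration and reparametrisation into six main/dual pairs. This matches the computational scheme the authors describe in Section \ref{sec:theproblem}, including the derivation of the dual maps from the factorisation \eqref{eq:hfact} via Lemma \ref{lem:invstruct}.
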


We call this class of maps the \emph{(P,Q) class}.
In the next section we present the explicit form of these maps and
we discuss their integrability properties.

\section{Maps of the (P,Q) class and their integrability properties}
\label{sec:theclass}

In this section we show the explicit form of the maps of the
class (P,Q).
We denote the pairs of main/dual maps by (x) where x is a small
roman number.
Moreover we discuss their integrability properties from
the point of view of the existence of invariants, 
the degree growth of their iterates
and the existence of Lagrangians. For the cases admitting Lagrangian
following corollary \ref{cor:costrpb} we present the form of their
symplectic structure.

\stepcounter{eqlistred}
\subsection{Maps (i)}

The main map $\pvec{x}\mapsto\varphi_\text{i}\left( \pvec{x} \right)=\pvec{x'}$ 
has the following components:
\begin{equation}
    \begin{aligned}
        x'&
        \begin{aligned}[t]
            &=
            -\{[\nu t^2 (x+ z)+u z^2] y+t^2 \mu u z+(x+z)^2 y^2\} d-a t^4,
        \end{aligned}
        \\
        y'&=x^2 d (t^2 \mu+x y), 
        \quad
        z'=y x d (t^2 \mu+x y),
        \\
        u'&= z x d (t^2 \mu+x y), 
        \quad
        t'= t x d (t^2 \mu+x y).
    \end{aligned}
    \tag{P.i}
    \label{eq:Mi}
\end{equation}
This map depends on four parameters $a, d$ and $\mu,\nu$.
The map  \eqref{eq:Mi} has the following degrees of iterates:
\begin{equation}
    \begin{aligned}
        \left\{d_n\right\}_{\text{\ref{eq:Mi}}}
        &= 1, 4, 12, 28, 52, 86, 130, 188, 260, 348, 452,  
        \\
        &\phantom{+}576, 720,886, 1074, 1288, 1528, 1796, 2092\dots
    \end{aligned}
    \label{eq:degMi}
\end{equation}
with generating function:
\begin{equation}
    g_{\text{\ref{eq:Mi}}}(s) 
    =\frac{s^7-3 s^6+s^5-s^4+3 s^3+3 s^2+s+1}{(s+1)(s^2+1)(s-1)^4}.
    \label{eq:gfMi}
\end{equation}
All the poles of the generating function \eqref{eq:gfMi} lie
on the unit circle, so that the map \eqref{eq:Mi} is integrable
according to the algebraic entropy criterion.
Moreover, due to the presence of $\left( s-1 \right)^{4}$ in the denominator
of the generating function \eqref{eq:gfMi} we have that the main map
\eqref{eq:Mi} has \emph{cubic} growth.

The map \eqref{eq:Mi} has the following invariants:
\begin{subequations}
    \begin{align}
        t^{6} \Ilow^{\text{\ref{eq:Mi}}} &
        \begin{aligned}[t]
        &=
        a t^4 y z+d\left[ \nu y^2 z^2-y z  (u x-u z-x y) \mu\right] t^2
        \\
        & \quad
        -y^2 z^2 d (u x-x y-y z-u z),
        \end{aligned}
        \label{eq:IlowMi}
        \\
        t^{8}\Ihigh^{\text{\ref{eq:Mi}}} &
        \begin{aligned}[t]
        &=
        \left[(u z+x y-y z) \mu-\nu y  z\right] a t^6
        \\
        &\quad
        +\left[y z (x y+y z+u z) a+d (u z+x y-y z)^2 \mu^2\right.
        \\
        &\quad
        \phantom{+}\left.+2 d y z (u x-y z)  \mu \nu -d \nu^2 y^2 z^2\right] t^4
        \\
        &\quad
        +\left[2 d z y (u z+x y-y z) (x y+y z+u z) \mu+2 d y^2 z^2 \nu u x\right] t^2
        \\
        &\quad
        +y^2 z^2 d (x y+y z+u z)^2.
        \end{aligned}
        \label{eq:IhighMi}
    \end{align}
    \label{eq:intMi}
\end{subequations}

The two invariants \eqref{eq:intMi} alone cannot explain why 
the map \eqref{eq:Mi} is integrable according to the algebraic entropy
criterion.
Indeed, as we stressed in section \ref{sec:setting} two invariants
are not enough to claim Liouville integrability, nor to claim integrability
in the sense of definition \ref{def:invint}.
We can show using the method presented in \cite{Gubbiotti_dcov} that the
map \eqref{eq:Mi} is not variational.
However, as we shown in \cite{GJTV_sanya}, the recurrence associated to the
map \eqref{eq:Mi} can be deflated to a three dimensional map via the transformation
$v_{k}=w_{k}w_{k+1}$. 
The map obtained in this way is then integrable in the sense of Liouville.
For all the details we refer to \cite{GJTV_sanya}.

The dual map $\pvec{x}\mapsto\varphi^{\vee}_\text{i}\left( \pvec{x} \right)=\pvec{x'}$ 
has the following components:
\begin{equation}
    \begin{aligned}
        x'&
        \begin{aligned}[t]
            &=
            \left[\beta (2 x y-2 y z+u z) \mu+(\beta \nu-\alpha) y (x-z)\right] t^2
            \\
            &\quad
            +\beta y (z^2 y-x^2 y+u z^2)
        \end{aligned}
        \\
        y' &=  x^2 \beta  (t^2 \mu+x y), \quad
        z' =  y x \beta  (t^2 \mu+x y), 
        \\
        u' &= z x \beta  (t^2 \mu+x y), \quad
        t' = t x \beta  (t^2 \mu+x y).
    \end{aligned}
    \tag{Q.i}
    \label{eq:Di}
\end{equation}
This map depends on three parameters $\alpha,\beta$, and $\mu,\nu$.
The parameters $\mu$ and $\nu$ are shared with the main map \eqref{eq:Mi}.
The map \eqref{eq:Di} has the following degrees of iterates:
\begin{equation}
    \begin{aligned}
        \left\{d_n\right\}_{\text{\ref{eq:Di}}}
        &= 1, 4, 12, 26, 48, 78, 118, 170, 234, 312, 406, 516,644,792\dots
    \end{aligned}
    \label{eq:degDi}
\end{equation}
with generating function:
\begin{equation}
    g_{\text{\ref{eq:Di}}}(s) =
    \frac{(s^3-2s^2-1)(s^3-s^2-s-1)}{(s^2+s+1)(s-1)^4}.
    \label{eq:gfDi}
\end{equation}
This means that the dual map is integrable according to the
algebraic entropy test with \emph{cubic} growth, just like the main map.

The main map \eqref{eq:Mi} possesses two invariants and 
depends on $a$ and $d$ whereas the dual map \eqref{eq:Di} do not
depend on them. Then according to \eqref{eq:hJdef} we can write down the first invariants
for the dual map \eqref{eq:Di} as:
\begin{equation}
    \alpha\Ilow^{\text{\ref{eq:Mi}}}+\beta\Ihigh^{\text{\ref{eq:Mi}}}
    =
    a\Ilow^{\text{\ref{eq:Di}}} +d \Ihigh^{\text{\ref{eq:Di}}}.
    \label{eq:derDi}
\end{equation}
Therefore we obtain the following expressions:
\begin{subequations}
    \begin{align}
        t^{4}\Ilow^{\text{\ref{eq:Di}}} &
        \begin{aligned}[t]
            &=
            (y z \alpha+(\mu x y-y z \mu-y \nu z+\mu u z) \beta) t^2
            \\
            &\quad
            +\beta y z (x y+y z+u z),
        \end{aligned}
        \label{eq:I1i}
        \\
        t^{8}\Ihigh^{\text{\ref{eq:Di}}} &
        \begin{aligned}[t]
            &=
            \left\{\left[y^2 z^2 \nu-y z (u x-u z-x y) \mu\right] \alpha\right.
            \\
            &\quad
            \phantom{+}\left.+\left[(u z+x y-y z)^2 \mu^2+2 y z (u x-y z) \nu \mu-\nu^2 y^2 z^2\right] \beta\right\} t^4
            \\
            &\quad
            +\left\{z^2 y^2 (x y+y z-u x+u z) \alpha\right.
            \\
            &\quad
            \phantom{+}\left.+\left[2 y z (u z+x y-y z) (x y+y z+u z) \mu+2 y^2 z^2 \nu u x\right] \beta\right\} t^2
            \\
            &\quad
            +z^2 y^2 (x y+y z+u z)^2 \beta.
        \end{aligned}
        \label{eq:I2i}
    \end{align}
    \label{eq:intDi}
\end{subequations}
The invariant \eqref{eq:I1i}  
has degree pattern $\left( 1,2,2,1 \right)$.

The properties of the dual map \eqref{eq:Di} are very similar
to those of the main map \eqref{eq:Mi}.
Again, the two invariants \eqref{eq:intDi} alone cannot explain the 
low growth and following \cite{Gubbiotti_dcov} no Lagrangian exists.
However, as was shown in \cite{GJTV_sanya}, the recurrence associated to the
map \eqref{eq:Di} can be deflated to a three dimensional via the transformation
$v_{k}=w_{k}w_{k+1}$. 
Again, the map obtained in this way is integrable in the sense of Liouville.
For all the details we refer again to \cite{GJTV_sanya}.

\subsection{Maps (ii)}

The main map $\pvec{x}\mapsto\varphi_\text{ii}\left( \pvec{x} \right)=\pvec{x'}$ 
has the following components:
\begin{equation}
    \begin{aligned}
        x' &=\left[(x^2+z^2) y-u z^2\right] \mu-t^2 (u-2y), 
        \\
        y' &= x (t^2+\mu x^2), 
        \quad
        z' = y (t^2+\mu x^2),
        \\
        u' &= z (t^2+\mu x^2), 
        \quad
        t'= t (t^2+\mu x^2).
    \end{aligned}
    \tag{P.ii}
    \label{eq:Pii}
\end{equation}
This map depends on the parameter $\mu$.
The map \eqref{eq:Pii} has the following degrees of iterates:
\begin{equation}
    \begin{aligned}
        \left\{d_n\right\}_{\text{\ref{eq:Pii}}}
        &=1,3,9,21,45,93,189,381,765,1533\dots
    \end{aligned}
    \label{eq:degP\theeqlist}
\end{equation}
with generating function:
\begin{equation}
    g_{\text{\ref{eq:Pii}}}(s) = \frac{1+2 s^2}{(2 s-1)(s-1)}.
    \label{eq:gfPii}
\end{equation}
This means that the main map is non-integrable according to the
algebraic entropy test with positive entropy $\varepsilon=\log2$.

Despite being non-integrable the main map 
\eqref{eq:Pii} has, by construction, the following invariants:
\begin{subequations}
    \begin{align}
        t^{4}\Ilow^{\text{\ref{eq:Pii}}} &
        \begin{aligned}[t]
        &=\left( x-z \right)\left( u-y \right)
        \left( {t}^{2}+{z}^{2}\mu \right)  \left( \mu{y}^{2}+{t}^{2} \right)     
        \end{aligned}
        \label{eq:IlowPii}
        \\
        t^{6}\Ihigh^{\text{\ref{eq:Pii}}} &
        \begin{aligned}[t]
        &=
        \left[  \left( x-z \right) ^{2}{y}^{4}+{y}^{2}{z}^{4}-2 y{z}^{4}u+{u}^{2}{z}^{4} \right] {\mu}^{2}
        \\
        &\quad
        +2 {t}^{2} \left[  \left( {x}^{2}-2 xz+2 {z}^{2} \right) {y}^{2}-2 y{z}^{2}u+{u}^{2}{z}^{2} \right] \mu
        \\
        &\quad
        +{t}^{4} \left( {z}^{2}+{u}^{2}+{x}^{2}+{y}^{2}-2 uy-2 xz \right)
        \end{aligned}
        \label{eq:IhighPii}
    \end{align}
    \label{eq:intMii}
\end{subequations}

Moreover, using the test of \cite{Gubbiotti_dcov} we have that
the map \eqref{eq:Pii} is not variational.

The dual map $\pvec{x}\mapsto\varphi^{\vee}_\text{ii}\left( \pvec{x} \right)=\pvec{x'}$ 
is given by the following components:
\begin{equation}
    \begin{aligned}
        x'&
        \begin{aligned}[t]
        &= 
        \alpha  \left[  \left( {x}^{2}-{z}^{2} \right) y+u{z}^{2} \right] {\mu}^{2}
        +  {t}^{2}\alpha u+\beta {y}^{2} \left( x-z \right)   \mu
        +{t}^{2}\beta  \left( x-z \right), 
        \end{aligned}
        \\
        y' &= \alpha x  \left( {t}^{2}+\mu{x}^{2} \right),
        \quad
        z' = \alpha y  \left( {t}^{2}+\mu{x}^{2} \right), 
        \\
        u' &= \alpha z  \left( {t}^{2}+\mu{x}^{2} \right),
        \quad
        t' = \alpha t  \left( {t}^{2}+\mu{x}^{2} \right).
    \end{aligned}
    \tag{Q.ii}
    \label{eq:Qii}
\end{equation}
This map depends on three parameters $\alpha,\beta$ and $\mu$.
The parameter $\mu$ is shared with the main map \eqref{eq:Pii}.
The map given by \eqref{eq:Qii} has the following degrees of iterates:
\begin{equation}
    \begin{aligned}
        \left\{d_n\right\}_{\text{\ref{eq:Qii}}}
        &=1, 3, 9, 21, 45, 93, 189, 381, 765, 1533, 3069\dots
    \end{aligned}
    \label{eq:degQii}
\end{equation}
with generating function:
\begin{equation}
    g_{\text{\ref{eq:Qii}}}(s) = \frac{1+2 s^2}{(2 s-1)(s-1)}.
    \label{eq:gfQii}
\end{equation}
This means that the main map is non-integrable according to the
algebraic entropy test with positive entropy $\varepsilon=\log2$.
We remark that the growth is exactly the same as the main map.

Since the main map \eqref{eq:Pii} possesses two invariants,
but it has only one parameter $\mu$ shared with the dual map.
Then according to \eqref{eq:hJdef} we can only write down a single invariant 
for the dual map \eqref{eq:Qii} as:
\begin{equation}
    I^{\text{\ref{eq:Qii}}}=
    \alpha\Ihigh^{\text{\ref{eq:Pii}}}+\beta\Ilow^{\text{\ref{eq:Pii}}}.
    \label{eq:derQii}
\end{equation}
The invariant \eqref{eq:derQii} has degree pattern
$\left( 2,4,4,2 \right)$.

Finally, using the test of \cite{Gubbiotti_dcov} we have that
the map \eqref{eq:Qii} is not variational, as the main map \eqref{eq:Pii}.

For additional comments about the maps \eqref{eq:Pii} and \eqref{eq:Qii}
we refer to \cite{GJTV_sanya}.

\subsection{Maps (iii)}

The main map $\pvec{x}\mapsto\varphi_\text{iii}\left( \pvec{x} \right)=\pvec{x'}$ 
is has the following components:
\begin{equation}
    \begin{aligned}
        x'&
        \begin{aligned}[t]
            &=-2 a t^4-2 \mu d (z+x+y) t^3 
            \\
            &
            \quad
            +\nu d \left[2(x+z) y+u z\right] t^2+d(2 y z^2 u+2 y^2 z x),
        \end{aligned}
        \\
        y'&= x^2 d (\nu t^2+2 x y),
        \quad
        z'=y x d (\nu t^2+2 x y),
        \\
        u'&=z x d (\nu t^2+2 x y), 
        \quad
        t'=t x d (\nu t^2+2 x y).
    \end{aligned}
    \tag{P.iii}
    \label{eq:Miii}
\end{equation}
This map depends on four parameters $a,d$ and $\mu,\nu$.
The map \eqref{eq:Miii} has the following degrees of iterates:
\begin{equation}
    \begin{aligned}
        \left\{d_n\right\}_{\text{\ref{eq:Miii}}}
        &= 1, 4, 12, 26, 49, 79, 113, 153, 199, 250, 310, 378, 449,
        \\
        &\phantom{=}526, 610, 698, 795,    901, 1009, 1123, 1245\dots
    \end{aligned}
    \label{eq:degMiii}
\end{equation}
with generating function:
\begin{equation}
    g_{\text{\ref{eq:Miii}}}(s) 
    =-\frac{4 s^8+4 s^7+10 s^6+9 s^5+13 s^4+7 s^3+6 s^2+2 s+1}{(s^2-s+1)(s^2+s+1)^2(s-1)^3}.
    \label{eq:gfMiii}
\end{equation}
This means that the main map is integrable according to the
algebraic entropy test with \emph{quadratic} growth.

The main map \eqref{eq:Miii} has the following invariants:
\begin{subequations}
    \begin{align}
        t^{6}\Ilow^{\text{\ref{eq:Miii}}} &
        \begin{aligned}[t]
        &=
        2 a t^4 y z+2 y z \mu d (y+z) t^3-y z d \nu (-2 y z-x y-u z+u x) t^2
        \\
        &\quad
        -2 y^2 z^2 (u x-u z-x y) d,
        \end{aligned}
        \label{eq:IlowMiii}
        \\
        t^{8}\Ihigh^{\text{\ref{eq:Miii}}} &
        \begin{aligned}[t]
        &=
        4 \mu a (y+z) t^7
        \\
        &\quad
        +(4 d y z \mu^2-2 a y z \nu+4 d z^2 \mu^2+4 d y^2 \mu^2+2 a z \nu u+2 a y \nu x) t^6
        \\
        &\quad
        +2 \mu \nu d (2 u z^2+2 x y^2+y z x+z u y) t^5
        \\
        &\quad
        +(d \nu^2 x^2 y^2-3 d \nu^2 y^2 z^2+4 d \nu^2 u x y z
        \\
        &\quad
        \phantom{+}+d \nu^2 u^2 z^2+4 a y^2 z x+4 a y z^2 u) t^4
        \\
        &\quad
        +4 \mu d y z (2 u z^2+2 x y^2+y z x+z u y) t^3
        \\
        &\quad
        +2 y d \nu z (u z+2 x y) (2 u z+x y) t^2
        \\
        &\quad
        +4 y^2 z^2 d (u^2 z^2+x^2 y^2+u x y z).
        \end{aligned}
        \label{eq:IhighMiii}
    \end{align}
    \label{eq:intMiii}
\end{subequations}

Moreover, we note that according to the test in \cite{Gubbiotti_dcov} the
main map \eqref{eq:Miii} does not posses a Lagrangian.
However, by direct search, we can prove that this map has an additional 
functionally independent invariant of degree pattern $(2,5,5,2)$. 
This means that the low growth of the main map \eqref{eq:Miii} is
explained in terms of integrability as existence of invariants, as given 
by definition \ref{def:invint}. 
More specifically, the quadratic growth it explained by the fact that
if a map in $\Pj^{4}$ has three rational invariants, the orbits are
confined to elliptic curves and the growth is at most quadratic \cite{Bellon1999}.

The dual map $\pvec{x}\mapsto\varphi^{\vee}_\text{iii}\left( \pvec{x} \right)=\pvec{x'}$ 
has the following components:
\begin{equation}
    \begin{aligned}
        x'&
        \begin{aligned}[t]
            &=
            2 \mu \beta (z-x) t^3
         + \left\{\beta \nu [zu + 2y\left( x-z \right) ] + 2 \alpha y ( z- x)\right\} t^2
            +2 \beta z^2 y u
        \end{aligned}
        \\
        y' &= x^2 \beta  (\nu t^2+2 x y),
        \quad
        z' = y x \beta  (\nu t^2+2 x y), 
        \\
        u' &= z x \beta  (\nu t^2+2 x y),
        \quad
        t' = t x \beta  (\nu t^2+2 x y).
    \end{aligned}
    \tag{Q.iii}
    \label{eq:Diii}
\end{equation}
This map depends on four parameters $\alpha,\beta$ and $\mu.\nu$.
The parameters $\mu,\nu$ are shared with the main map \eqref{eq:Miii}.
The map \eqref{eq:Diii} has the following degrees of iterates:
\begin{equation}
    \begin{aligned}
        \left\{d_n\right\}_{\text{\ref{eq:Diii}}}
        &=1,4,12,28,62,131,272,554,1120,2253,4528, 9092, 
        \\
        &\phantom{=}18244,36601, 73420, 147270, 295392, 592487, 
        \\
        &\phantom{=}1188378, 2383576, 4780824, 9589061,19233098,
        \\
        &\phantom{=} 38576452, 77374040, 155191611, 311272822, 624329930\dots,
    \end{aligned}
    \label{eq:degDiii}
\end{equation}
with generating function:
\begin{equation}
    g_{\text{\ref{eq:Diii}}}(s) =
    \frac{P_{\text{\ref{eq:Diii}}}(s)}{Q_{\text{\ref{eq:Diii}}}(s)},
    \label{eq:gfDiii}
\end{equation}
where
\begin{subequations}
    \begin{align}
    P_{\text{\ref{eq:Diii}}}(s)&
    \begin{aligned}[t]
        &=
            s^{14} + 2 s^{13} + 4 s^{12} + 6 s^{10} - s^{9} + 5 s^{8} 
            \\
            &+s^{7}+ 5 s^{6} + s^{5} + 4 s^{4} + s^{3} + 3 s^{2} + s + 1
    \end{aligned}
        \label{eq:PQiii}
    \\
        Q_{\text{\ref{eq:Diii}}}(s) &=\left(1-s\right) \left(s^{2} + 1\right) \left(s^{10} - 2 s^{9} - 2 s + 1\right). 
        \label{eq:QQiii}
    \end{align}
    \label{eq:PQQiii}
\end{subequations}
The growth of the map \eqref{eq:Diii} is given by the inverse of
the smallest pole of the generating function \eqref{eq:gfDiii}.
These poles are given by the zeroes of the function 
$Q_{\text{\ref{eq:Diii}}}(s)$ in \eqref{eq:QQiii}.
Clearly the zeroes of $1-s$ and $s^{2}+1$, lie on the unit circle,
therefore we have to look at the location of the zeroes of
the polynomial:
\begin{equation}
    q\left( s \right) = s^{10} - 2 s^{9} - 2 s + 1.
    \label{eq:qdef}
\end{equation}
Defining $q_{1}\left( s \right) = -2s+1$ and $q_{2}\left( s \right)=s^{10}-s^{9}$
we have that on the circle $C_{\rho}:=\left\{ s\in\Cp \middle| \abs{s} = \rho \right\}$
with $\rho\in \left( 1/2,1 \right)$ the following inequality holds:
\begin{equation}
    \abs{q_{2}\left(  s\right)} < \abs{q_{1}\left( s \right)}.
    \label{eq:rouche}
\end{equation}
By Rouche's theorem \cite{Ablowitz2003} this implies that $q_{1}\left( s \right)$ and
$q_{1}\left( s \right)+q_{2}\left( s \right)=q\left( s \right)$ have the
same number of zeroes inside the circle $C_{\rho}$, i.e. the polynomial
$Q_{\text{\ref{eq:Diii}}}(s)$ has a unique zero inside the circle $C_{\rho}$.
This zero is the smallest one of  $Q_{\text{\ref{eq:Diii}}}(s)$ and due to the fact 
that $Q_{\text{\ref{eq:Diii}}}(s)$ has real coefficients this zero is real.
This implies the growth of the dual map \eqref{eq:Diii} is \emph{exponential}.
The approximate value of the zero of $Q_{\text{\ref{eq:Diii}}}(s)$ inside $C_{\rho}$
is $s_{0}=0.49857104591719819\dots$. This implies that the algebraic
entropy of the dual map \eqref{eq:Diii} is:
\begin{equation}
    \varepsilon_\text{\ref{eq:Diii}} = \log \left(2.0057321984279013\dots\right).
    \label{eq:aeQiii}
\end{equation}
The growth of the sequence of degrees of equation \eqref{eq:Diii} is then 
slightly greater than $2^{n}$.

Since the main map \eqref{eq:Miii} possesses two invariants and 
depends on $a$ and $d$ whereas the dual map \eqref{eq:Diii} do not
depend on them according to \eqref{eq:hJdef} we can write down the invariants
for the dual map \eqref{eq:Diii} as:
\begin{equation}
    \alpha\Ilow^{\text{\ref{eq:Miii}}}+\beta\Ihigh^{\text{\ref{eq:Miii}}}
    =
    a\Ilow^{\text{\ref{eq:Diii}}} +d \Ihigh^{\text{\ref{eq:Diii}}}.
    \label{eq:derDiii}
\end{equation}
Therefore we obtain the following expressions:
\begin{subequations}
    \begin{align}
        t^{4}\Ilow^{\text{\ref{eq:Diii}}} &
        \begin{aligned}[t]
            &=
            2 \mu (y+z) \beta t^3+\left[2 \alpha y z+\nu (u z+x y-y z) \beta\right] t^2
            \\
            &\quad 
            +2 y z (u z+x y) \beta,
        \end{aligned}
        \label{eq:I1iii}
        \\
        t^{8}\Ihigh^{\text{\ref{eq:Diii}}} &
        \begin{aligned}[t]
            &=
            2 \mu^2 (y z+z^2+y^2) \beta t^6
            \\
            &\quad
            +\left[ y \mu z (y+z) \alpha+\mu \nu (2 u z^2+2 x y^2+y z x+z u y) \beta\right] t^5
            \\
            &
            \quad
            +\frac{1}{2}
            [\nu y z (2 y z+x y+u z-u x) \alpha
            \\
            &\quad
            \phantom{+}+ \nu^2 (4 u x y z+u^2 z^2-3 z^2 y^2+x^2 y^2) \beta] t^4
            \\
            &\quad
            +2 \mu y z (2 u z^2+2 x y^2+y z x+z u y) \beta t^3
            \\
            &
            \quad
            +( z^2 y^2 (u z+x y-u x) \alpha+\nu y z (u z+2 x y) (2 u z+x y) \beta) t^2
            \\
            &\quad 
            +2 y^2 z^2 (u^2 z^2+x^2 y^2+u x y z) \beta.
        \end{aligned}
        \label{eq:I2iii}
    \end{align}
    \label{eq:intDiii}
\end{subequations}
The first invariant \eqref{eq:I1iii} has degree pattern
$\left( 1,2,2,1 \right)$ and the second invariant has degree pattern 
$\left( 2,4,4,2 \right)$. 
However, the degree pattern of the second invariant is not
\emph{minimal}: we can reduce the degree pattern of the second invariant 
to $(1,3,3,1)$ by replacing $\Ihigh$ with  $2\beta \Ihigh-\Ilow^2$.
Moreover, we  can see that the existence of these two invariants is not sufficient
to ensure the low growth of the dual map \eqref{eq:Diii}.
Finally, we note that that the dual map \eqref{eq:Diii} according to the
test in \cite{Gubbiotti_dcov} does not possess a Lagrangian.

\subsection{Maps (iv)}

The main map $\pvec{x}\mapsto\varphi_\text{iv}\left( \pvec{x} \right)=\pvec{x'}$ 
has the following components:
\begin{equation}
    \begin{aligned}
        x'&
        \begin{aligned}[t]
            &=-t^3 a-b t^2 y-d \nu y (x+y+z) t
            \\
            &\quad
            -d y (y^2+2 x y+2 y z+x^2+x z+u z+z^2),
        \end{aligned}
        \\
        y'&=d y x^2, 
        \quad
        z'= d x y^2 ,
        \quad
        u'= d z x y , 
        \quad
        t'= d t x  y.
    \end{aligned}
    \tag{P.iv}
    \label{eq:Miv}
\end{equation}
This map depends on four parameters $a,b, d$ and $\nu$.
We note that the map \eqref{eq:Miv} is the autonomous \ref{dPI2}, 
derived in \cite{CresswellJoshi1999} and whose invariants, duality
and growth properties where studied in \cite{JoshiViallet2017}.
For sake of completeness we repeat these properties here.
The map \eqref{eq:Miv} has the following degrees of iterates:
\begin{equation}
    \begin{aligned}
        \left\{d_n\right\}_{\text{\ref{eq:Miv}}}
        &= 1, 3, 6, 12, 21, 33, 47, 64, 83, 
        \\
        &\phantom{+}104, 128, 154, 183, 214, 248, 284\dots
    \end{aligned}
    \label{eq:degMiv}
\end{equation}
with generating function:
\begin{equation}
    g_{\text{\ref{eq:Miv}}}(s) 
    =-\frac{s^{10}-s^9-s^6+2 s^4+2 s^3+s+1}{(s+1)(s-1)^3}.
    \label{eq:gfMiv}
\end{equation}
This means that the main map is integrable according to the
algebraic entropy test with \emph{quadratic} growth.

The map \eqref{eq:Miv} has the following invariants:
\begin{subequations}
    \begin{align}
        t^{4}\Ilow^{\text{\ref{eq:Miv}}} &
        \begin{aligned}[t]
        &=
        t^3 (y+z) a+z b t^2 y+d \nu y z (y+z) t
        \\
        &\quad
        -d y z (u x-x y-2 y z-u z-y^2-z^2),
        \end{aligned}
        \label{eq:IlowMiv}
        \\
        t^{5}\Ihigh^{\text{\ref{eq:Miv}}} &
        \begin{aligned}[t]
        &=
        -\nu a (y+z) t^4+\left[\left(y^2+z^2+2 y z+u z+x y\right) a-y z b \nu\right] t^3
        \\
        &\quad
        -y z \left[\nu^{2}d (y+z)-b (y+ z+ u+ x)\right] t^2
        +d \nu y z (u y+x z+2 u x) t
        \\
        &\quad 
        +d z y (x+u+z+y) (y^2+z^2+2 y z+u z+x y).
        \end{aligned}
        \label{eq:IhighMiv}
    \end{align}
    \label{eq:intMiv}
\end{subequations}

Using the methods of \cite{Gubbiotti_dcov} we have that the map
\eqref{eq:Miv} is variational.
In affine coordinates $w_{n}$ its Lagrangian is given by:
\begin{equation}
    \begin{aligned}
        L_\text{\ref{eq:Miv}}&=
        w_{{n}}w_{{n+1}}w_{{n+2}}+\frac{w_{{n}}^{3}}{3}
        +w_{{n+1}}w_{{n}}^{2}+w_{{n+1}}^{2}w_{{n}}
        \\
        &\quad
        +\nu \left( \frac{w_{{n}}^{2}}{2}+w_{{n+1}}w_{{n}} \right)
        +\frac {a }{d}\log  \left( w_{{n}} \right) +\frac{b}{d}w_{{n}}.
    \end{aligned}
    \label{eq:LagrPiv}
\end{equation}
Using Corollary~\ref{C:Poisson}, 
we obtain the following non-degenerate Poisson 
bracket\footnote{Asterisked entries are placed to avoid the repetitions of
    entries, since a Poisson-bracket is skew-symmetric $J_{i,j}=-J_{j,i}$.}
\begin{equation}
    J_\text{\ref{eq:Miv}} = 
    \begin {bmatrix} 0 & 0 & {\frac {1}{dw_{{n-1}}}}& 
        -{\frac {\mu+w_{{n-2}}+2 (w_{{n}}+ w_{{n-1}})+w_{{n+1}}}{dw_{{n-1}}w_{{n}}}}
            \\
            0&0&0&{\frac {1}{dw_{{n}}}}
            \\
            -*&0&0&0
            \\ 
            -*&-*
            &0&0
        \end{bmatrix}.
\label{E:Poisson_PIV}
\end{equation}
One can check that the invariants \eqref{eq:intMiv} are in involution
with respect to the Poisson bracket \eqref{E:Poisson_PIV}.
Therefore, the map \eqref{eq:Miv} is Liouville integrable.

The dual map $\pvec{x}\mapsto\varphi^{\vee}_\text{iv}\left( \pvec{x} \right)=\pvec{x'}$ 
has the following components:
\begin{equation}
    \begin{aligned}
        x'&
        \begin{aligned}[t]
            &=
            \left[z^2+(y+u-\nu t) z+x (\nu t-x-y)\right] \beta+t \alpha (z-x)
        \end{aligned}
        \\
        y' &=  x^2 \beta, \quad
        z' =  x y \beta, \quad
        u' = x \beta  z, \quad
        t' = t \beta  x.
    \end{aligned}
    \tag{Q.iv}
    \label{eq:Div}
\end{equation}
This map depends on three parameters $\alpha,\beta$, and $\nu$.
The parameter $\nu$ is shared with the main map \eqref{eq:Miv}.
The map \eqref{eq:Div} has the following degrees of iterates:
\begin{equation}
    \begin{aligned}
        \left\{d_n\right\}_{\text{\ref{eq:Div}}}
        &= 1, 2, 4, 7, 11, 17, 24, 32, 41, 52, 64, 77, 91, 107, 124, 
        \\
        &\phantom{+}142,161, 182, 204, 227, 251\dots
    \end{aligned}
    \label{eq:degDiv}
\end{equation}
with generating function:
\begin{equation}
    g_{\text{\ref{eq:Div}}}(s) = -\frac{2 s^5+s^3+s^2+1}{(s+1)(s^2+1)(s-1)^3}.
    \label{eq:gfDiv}
\end{equation}
This means that the dual map is integrable according to the
algebraic entropy test with \emph{quadratic} growth, just like the main map.

Since the main map \eqref{eq:Miv} possesses two invariants and 
depends on $a,b$ and $d$ whereas the dual map \eqref{eq:Div} do not
depend on them according to \eqref{eq:hJdef} we can write down the invariants
for the dual map \eqref{eq:Div} as:
\begin{equation}
    \alpha\Ilow^{\text{\ref{eq:Miv}}}+\beta\Ihigh^{\text{\ref{eq:Miv}}}
    =
    aI_{1}^{\text{\ref{eq:Div}}} +d I_{2}^{\text{\ref{eq:Div}}}
    +b I_{3}^{\text{\ref{eq:Div}}}.
    \label{eq:derDiv}
\end{equation}
Therefore we obtain the following expressions:
\begin{subequations}
    \begin{align}
        t^{2}I_{1}^{\text{\ref{eq:Div}}} &
        \begin{aligned}[t]
            &=
            (y+z) (\alpha-\nu \beta) t+(y^2+z^2+2 y z+u z+x y) \beta,
        \end{aligned}
        \label{eq:I1iv}
        \\
        t^{5}I_{2}^{\text{\ref{eq:Div}}} &
        \begin{aligned}[t]
            &=
            \nu y z (y+z)\left( \alpha-\nu\beta \right) t^2
            \\
            &\quad
            +\left[y z (u y+x z+2 u x) \beta \nu-y z (u x-x y-2 y z-u z-y^2-z^2) \alpha\right] t
            \\
            &\quad
            +y z (x+u+z+y) (y^2+z^2+2 y z+u z+x y) \beta,
        \end{aligned}
        \label{eq:I2iv}
        \\
        t^{3}I_{3}^{\text{\ref{eq:Div}}} &
        \begin{aligned}[t]
            &=
            y z (\alpha-\nu \beta) t+y z (x+u+z+y) \beta.
        \end{aligned}
        \label{eq:I3iv}
    \end{align}
    \label{eq:intDiv}
\end{subequations}
The invariants \eqref{eq:I1iv} and \eqref{eq:I3iv} 
both have degree pattern $\left( 1,2,2,1 \right)$.
However, the second invariant is not minimal and it 
can be replaced with an invariant of degree pattern $(1,3,3,1)$.
Moreover, using the test of \cite{Gubbiotti_dcov},
we obtain that the map \eqref{eq:Div} is not variational.
Therefore, we conclude that the dual map \eqref{eq:Div} is integrable
in the sense of the existence of invariants, i.e. according to definition
\ref{def:invint}.

\subsection{Maps (v)}

The main map $\pvec{x}\mapsto\varphi_\text{v}\left( \pvec{x} \right)=\pvec{x'}$ 
has the following components:
\begin{equation}
    \begin{aligned}
        x' &=-d \left( x+z \right) ^{2}{y}^{3}
        - \left[ \nu \left( x+z \right) {t}^{2}+u{z}^{2} \right] d{y}^{2}
        - c {t}^{4}y-{t}^{5}a,
        \\
        y' &= d{x}^{3}{y}^{2}, 
        \quad
        z' =d{y}^{3}{x}^{2},
        \quad
        u' = dz{x}^{2}{y}^{2}, 
        \quad
        t'= dt{x}^{2}{y}^{2}.
    \end{aligned}
    \tag{P.v}
    \label{eq:Pv}
\end{equation}
This map depends on the parameters $a,c,d$ and $\nu$.
The map \eqref{eq:Pv} has the following degrees of iterates:
\begin{equation}
    \begin{aligned}
        \left\{d_n\right\}_{\text{\ref{eq:Pv}}}
        &=1,5,15,35,65,103,149,201,261,329,405,489,581,681\dots
    \end{aligned}
    \label{eq:degPv}
\end{equation}
with generating function:
\begin{equation}
    g_{\text{\ref{eq:Pv}}}(s) = -\frac{1+2 s+3 s^2+4 s^3-2 s^5-2 s^7+2 s^8}{(s-1)^3}.
    \label{eq:gfPv}
\end{equation}
This means that the main map is integrable according to the
algebraic entropy test with \emph{quadratic} growth.

The map \eqref{eq:Pv} has the following invariants:
\begin{subequations}
    \begin{align}
        t^{6}\Ilow^{\text{\ref{eq:Pv}}} &
        \begin{aligned}[t]
        &={z}^{2}d \left( x+z \right) {y}^{3}
        +{z}^{2}d \left( \nu{t}^{2}+zu-ux \right) {y}^{2}
        \\
        &\quad
        +{t}^{4} \left(  c z+at \right) y+{t}^{5}az
        \end{aligned}
        \label{eq:IlowPv}
        \\
        t^{8}\Ihigh^{\text{\ref{eq:Pv}}} &
        \begin{aligned}[t]
        &={z}^{2}d \left( x+z \right) ^{2}{y}^{4}
        +2 {z}^{3}ud \left( x+z \right) {y}^{3}
        \\
        &\quad
        + \left\{ {z}^{4}d{u}^{2}
            + \left[  \left(c-{\nu}^{2}d \right) {t}^{2}+2 xu\nu d \right] {t}^{2}{z}^{2}
        \right.
            \\
            &\quad
            \left.\phantom{+}
            + \left( at+x c  \right) {t}^{4}z+{t}^{5}ax \right\} {y}^{2}
            \\
            &\quad
            + \left[  \left( at+u c  \right) {z}^{2}
            -{t}^{2}\nu c z-a\nu{t}^{3} \right] {t}^{4}y
            \\
            &\quad
            +z{t}^{5}a \left( zu -\nu{t}^{2}\right).
        \end{aligned}
        \label{eq:IhighPv}
    \end{align}
    \label{eq:intPv}
\end{subequations}

Using the methods of \cite{Gubbiotti_dcov} we have that the map
\eqref{eq:Pv} is variational.
In affine coordinates $w_{n}$ its Lagrangian is given by:
\begin{equation}
    L_\text{\ref{eq:Pv}}=
    w_{{n}}w_{{n+1}}^{2}w_{{n+2}}
    +\frac{w_{{n+1}}^{2}w_{{n}}^{2}}{2}
    +\nu w_{{n+1}}w_{{n}}
    -{\frac {a}{d}}\frac{1}{w_{{n}}}
    +\frac{c}{d}\log\left( w_{{n}} \right)
    \label{eq:LagrPv}
\end{equation}
Using Corollary \ref{C:Poisson}
we obtain the following non-degenerate Poisson structure
\begin{equation}
    \label{E:Poisson_PV}
    J_\text{\ref{eq:Pv}}=
    \begin{bmatrix} 0&0&
        \frac{1}{w_{{n-1}}^{2}}&
        -{\frac {2 \left(w_{{n}}w_{{n-1}}+w_{{n}}w_{{n+1}}+w_{{n-2}}w_{{n-1}}\right)+\nu}{w_{{n-1}}^{2}w_{{n}}^{2}}}
        \\ 
        0&0&0&\frac{1}{w_{{n}}^{2}}
        \\ 
        -*&0&0&0
        \\
        -*&-*&0&0\end {bmatrix}. 
\end{equation}
One can check that the invariants \eqref{eq:intPv} are in involution
with respect to the Poisson bracket \eqref{E:Poisson_PV}.
Hence,  the map \eqref{eq:Pv} is Liouville integrable.

The dual map $\pvec{x}\mapsto\varphi^{\vee}_\text{v}\left( \pvec{x} \right)=\pvec{x'}$ 
has the following components:
\begin{equation}
    \begin{aligned}
        x'&
        \begin{aligned}[t]
        &= 
        \left[ \nu \left( x-z \right) {t}^{2} + \left( u+y \right) {z}^{2}-{x}^{2}y \right] \beta
        -{t}^{2}\alpha  \left( x-z \right),
        \end{aligned}
        \\
        y' &= \beta{x}^{3},
        \quad
        z' = \beta{x}^{2}y, 
        \quad
        u' = \beta{x}^{2}z,
        \quad
        t' = \beta{x}^{2}t.
    \end{aligned}
    \tag{Q.v}
    \label{eq:Qv}
\end{equation}
This map depends on three parameters $\alpha,\beta$ and $\nu$.
The parameter $\nu$ is shared with the main map \eqref{eq:Pv}.
The map given by \eqref{eq:Qv} has the following degrees of iterates:
\begin{equation}
    \begin{aligned}
        \left\{d_n\right\}_{\text{\ref{eq:Qv}}}
        &=1,3,9,19,33,51,73,99,129,163\dots
    \end{aligned}
    \label{eq:degQv}
\end{equation}
with generating function:
\begin{equation}
    g_{\text{\ref{eq:Qv}}}(s) = -\frac{3s^2+1}{(s-1)^3}.
    \label{eq:gfQv}
\end{equation}
This means that the main map is integrable according to the
algebraic entropy test with \emph{quadratic} growth like the main map.

The main map \eqref{eq:Pv} possesses two invariants and 
depends on $a,c$ and $d$ whereas the dual map \eqref{eq:Qv} do not
depend on them. Then according to \eqref{eq:hJdef} we can write down the invariants
for the dual map \eqref{eq:Qv} as:
\begin{equation}
    \alpha\Ilow^{\text{\ref{eq:Pv}}}+\beta\Ihigh^{\text{\ref{eq:Pv}}}
    =
    a I_{1}^{\text{\ref{eq:Qv}}} +c I_{2}^{\text{\ref{eq:Qv}}}
    +d I_{3}^{\text{\ref{eq:Qv}}}.
    \label{eq:derQv}
\end{equation}
Therefore we obtain the following expressions:
\begin{subequations}
    \begin{align}
        t^{4}I_{1}^{\text{\ref{eq:Qv}}} &=
               \beta \left[  \left( x+z \right) {y}^{2}+y{z}^{2}+u{z}^{2} \right] t 
            +(\alpha-\beta\nu)( y+z) {t}^{3},
        \label{eq:I1Qv}
        \\
        t^{4}I_{2}^{\text{\ref{eq:Qv}}} &
        =\left\{  \left[  \left( z+x \right) y+zu-\nu{t}^{2} \right] \beta+{t}^{2}\alpha \right\} zy,
        \label{eq:I2Qv}
        \\
        t^{8}I_{3}^{\text{\ref{eq:Qv}}} &
        \begin{aligned}[t]
            &=
            {y}^{2}{z}^{2} \left\{  
                \left[ \left( x+z \right) ^{2}{y}^{2}+2 uz \left( x+z \right) y
                -{\nu}^{2}{t}^{4}+2 \nu{t}^{2}ux+{u}^{2}{z}^{2} \right]\beta
            \right.
            \\
            &\qquad \quad
            \phantom{+}+ \left[  \left( x+z \right) y+\nu{t}^{2}-ux+zu \right] {t}^{2}\alpha 
            \Bigr\}.
        \end{aligned}
        \label{eq:I3Qv}
    \end{align}
    \label{eq:intQv}
\end{subequations}
We note that the degree pattern of these invariants
is $\left( 1,2,2,1 \right)$, $\left( 1,2,2,1 \right)$ and
$\left( 2,4,4,2 \right)$ respectively.
Finally, using the test of \cite{Gubbiotti_dcov},
we obtain that the map \eqref{eq:Qv} is not variational.
Therefore, we conclude that the dual map \eqref{eq:Qv} is integrable
in the sense of the existence of invariants, i.e. according to definition
\ref{def:invint}.

\stepcounter{eqlistred}
\subsection{Maps (vi)}


The main map $\pvec{x}\mapsto\varphi_\text{vi}\left( \pvec{x} \right)=\pvec{x'}$ 
has the following components:
\begin{equation}
    \begin{aligned}
        x' &
        \begin{aligned}[t]
            &=-\delta a{t}^{5}
            -\delta  \left[  \left( u-y \right) \mu a\delta+cy+d \left( x+z \right)  \right] {t}^{4}
            \\
            &\quad
            + \left\{ a\mu  \left[ u{y}^{2}+\left( x+z \right) ^{2}y+u{z}^{2} \right] \delta+d \left( x+z
            \right) {y}^{2} \right\} {t}^{2}
             \\
             &\quad
             -\mu  \left[  \left( x+z \right) ^{2}y+u{z}^{2} \right] a{y}^{2}
        \end{aligned}
        \\
        y' &= a\mu x \left( \delta {t}^{2}-{y}^{2} \right)  \left( \delta {t}^{2}-{x}^{2} \right), 
        \quad
        z' = a\mu y \left( \delta {t}^{2}-{y}^{2} \right)  \left( \delta {t}^{2}-{x}^{2} \right),
        \\
        u' &= a\mu z \left( \delta {t}^{2}-{y}^{2} \right)  \left( \delta {t}^{2}-{x}^{2} \right), 
        \quad
        t'= a\mu  t\left( \delta {t}^{2}-{y}^{2} \right)  \left( \delta {t}^{2}-{x}^{2} \right) . 
    \end{aligned}
    \tag{P.vi}
    \label{eq:Pvi}
\end{equation}
This map depends on the five parameters $a,c,d$ and $\mu,\delta$.
The maps \eqref{eq:Pvi} it is a slight generalization of \ref{dPII2} equation
which was discussed in \cite{JoshiViallet2017}. 
Here, we recall its properties and we discuss its duality in the parameter space.
First, the map \eqref{eq:Pvi} has the following degrees of iterates:
\begin{equation}
    \begin{aligned}
        \left\{d_n\right\}_{\text{\ref{eq:Pvi}}}
        &=1,5,15,35,65,103,149,201,261,329,405,489,581,681\dots
    \end{aligned}
    \label{eq:degPvi}
\end{equation}
with generating function:
\begin{equation}
    g_{\text{\ref{eq:Pvi}}}(s) = -\frac{1+2 s+3 s^2+4 s^3-2 s^5-2 s^7+2 s^8}{(s-1)^3}.
    \label{eq:gfPvi}
\end{equation}
This means that the main map is integrable according to the
algebraic entropy test with \emph{quadratic} growth.

The map \eqref{eq:Pvi} has the following invariants:
\begin{subequations}
    \begin{align}
        t^{6}\Ilow^{\text{\ref{eq:Pvi}}} &
        \begin{aligned}[t]
        &=a\delta  \left( y+z \right) {t}^{5}
        - \left[  \left( u \left( x-z\right) -xy \right) \mu a\delta
        -cyz-d\left({y}^{2}+{z}^{2}\right) \right] \delta {t}^{4}
        \\
        &\quad
        - \left\{ d{y}^{2}{z}^{2}
            +a\delta\mu\left( {y}^{2}+{z}^{2} \right) \left[\left( x+z \right)y-\left( x-z \right)u   \right]
        \right\} {t}^{2}
             \\
             &\quad
             +a\mu{y}^{2}{z}^{2} \left[ (x+z)y- (x-z)u\right]
        \end{aligned}
        \label{eq:IlowPvi}
        \\
        t^{8}\Ihigh^{\text{\ref{eq:Pvi}}} &
        \begin{aligned}[t]
        &=
        {\delta}^{2}a\left( u+x+y+z \right) {t}^{7}
        \\
        &\quad
        +{\delta}^{2} \left[ a
        \delta \mu  \left( {u}^{2}-uy+{x}^{2}-xz+{y}^{2}+{z}^{2} \right)\right.
        \\
        &\qquad
        +\left.\left( cu+dx+dz \right) y+ \left( cx+du \right) z+xdu \right] {t}^{6}
        \\
        &\quad
        -\delta a \left[  \left( x+z \right) {y}^{2}+y{z}^{2}+u{z}^{2}\right] {t}^{5}
         \\
         &\quad
         -\delta  \left\{  \left[  \left( {u}^{2}+2 {x}^{2}+xz
         +{z}^{2} \right) {y}^{2}
        +z \left( 2x+z \right) uy
        +{z}^{2} \left( 2{u}^{2}+{x}^{2} \right)  \right] \mu \delta a
        \right.
        \\
        &\qquad
        +d \left( x+z\right) {y}^{3}+ \left( x+z \right)  \left( cz+du \right) {y}^{2}
        \\
        &\qquad\left.
        +{z}^{2} \left( cu+dx+dz \right) y+du{z}^{2} \left( x+z \right)  \right\} {t}^{4}
        \\
        &\quad
        + \biggl\{ 
            2 \mu a\delta \left[ \frac{1}{2}  \left( x+z \right) ^{2}{y}^{4}
                +uz\left( x+z \right) {y}^{3}
                +\frac{ {u}^{2}{z}^{4}}{2}
            \right.
                \\
                &\quad\quad\left.
                +{z}^{2} \left( {u}^{2}+{x}^{2}+xz+\frac{{z}^{2}}{2} \right) {y}^{2}
            +u{z}^{3} \left( x+z \right) y\right]
         \\
         &\qquad
         +d{y}^{2}{z}^{2} \left( x+z \right)  \left( u+y\right) \biggr\} {t}^{2}
         - \left[  \left( x+z \right) y+uz \right] ^{2}\mu a {z}^{2}{y}^{2}        
        \end{aligned}
        \label{eq:IhighPvi}
    \end{align}
    \label{eq:intPvi}
\end{subequations}

Using the methods of \cite{Gubbiotti_dcov} we have that the map
\eqref{eq:Pvi} is variational.
In affine coordinates $w_{n}$ its Lagrangian is given by:
\begin{equation}
    \begin{aligned}
        L_\text{\ref{eq:Pvi}}&=
        \left( w_{{n+1}}^{2}-\delta \right) w_{{n}}w_{{n+2}}
    +\frac{w_{{n+1}}^{2}w_{{n}}^{2}}{2}-{\frac {d}{a\mu}}  w_{{n+1}}w_{{n}}  
    \\
    &\quad -{\frac {1}{2a\mu} \left[ \delta  \left( \delta a\mu-c \right) \log  \left( w_{{n}}^{2}-\delta \right) 
    +2 a\sqrt {\delta}\arctanh \left({\frac {w_{{n}}}{\sqrt {\delta}}}\right) \right] }
    \end{aligned}
    \label{eq:LagrPvi}
\end{equation}
 Using Corollary \ref{C:Poisson}, we obtain the following non-degenerate Poisson structure for the map
\eqref{eq:Pvi}
\begin{equation}
\label{E:Poisson_PVI}
J_\text{\ref{eq:Pvi}} = 
\begin{bmatrix} 
0&0&\frac{1}{ w_{{n-1}}^{2}-\delta} &-{\frac {2 a\mu\ \left(w_{{n}}w_{{n-1}}+w_{{n}}w_{{n+1}}w_{{n-1}}w_{{n-2}}\right)-d}{a\mu  \left(\delta -w_{{n-1}}^{2} \right)  \left( \delta-w_{{n}}^{2} \right) }}
\\
0&0&0&\frac{1}{w_{n}^{2} -\delta}
\\
-*&0&0&0
\\ 
-*&-*&0&0
\end {bmatrix}.
\end{equation} 
One can check that the invariants \eqref{eq:intPvi} are in involution
with respect to the Poisson bracket \eqref{E:Poisson_PVI}.
Therefore, the map \eqref{eq:Pvi} is Liouville integrable.

The dual map $\pvec{x}\mapsto\varphi^{\vee}_\text{vi}\left( \pvec{x} \right)=\pvec{x'}$ 
has the following components:
\begin{equation}
    \begin{aligned}
        x'
        &=\left[ \delta\,{t}^{2}u-\left( y+u \right) {z}^{2}+{x}^{2}y\right] \beta
        +\alpha\,{t}^{2} \left( x-z \right)        
        \\
        y' &= \beta x \left( \delta\,{t}^{2}-{x}^{2} \right) ,
        \quad
        z' = \beta y \left( \delta\,{t}^{2}-{x}^{2} \right), 
        \\
        u' &= \beta z \left( \delta\,{t}^{2}-{x}^{2} \right),
        \quad
        t' = \beta t \left( \delta\,{t}^{2}-{x}^{2} \right).
    \end{aligned}
    \tag{Q.vi}
    \label{eq:Qvi}
\end{equation}
This map depends on three parameters $\alpha,\beta$ and $\delta$.
The parameter $\delta$ is shared with the main map \eqref{eq:Pvi}.
The map given by \eqref{eq:Qvi} has the following degrees of iterates:
\begin{equation}
    \begin{aligned}
        \left\{d_n\right\}_{\text{\ref{eq:Qvi}}}
        &=1,3,9,19,33,51,73,99\dots
    \end{aligned}
    \label{eq:degQvi}
\end{equation}
with generating function:
\begin{equation}
    g_{\text{\ref{eq:Qvi}}}(s) = -\frac{3s^2+1}{(s-1)^3}.
    \label{eq:gfQvi}
\end{equation}
This means that the main map is integrable according to the
algebraic entropy test with \emph{quadratic} growth like the main map.

The main map \eqref{eq:Pvi} possesses two invariants and 
depends on $a,c$ and $d$ whereas the dual map \eqref{eq:Qvi} do not
depend on them. Then according to \eqref{eq:hJdef} we can write down the invariants
for the dual map \eqref{eq:Qvi} as:
\begin{equation}
    \alpha\Ilow^{\text{\ref{eq:Pvi}}}+\beta\Ihigh^{\text{\ref{eq:Pvi}}}
    =
    a I_{1}^{\text{\ref{eq:Qvi}}} +c I_{2}^{\text{\ref{eq:Qvi}}}
    +d I_{3}^{\text{\ref{eq:Qvi}}}.
    \label{eq:derQvi}
\end{equation}
Therefore we obtain the following expressions:
\begin{subequations}
    \begin{align}
        t^{8}I_{1}^{\text{\ref{eq:Qvi}}} &
        \begin{aligned}[t]
            &=\delta  \left[ \beta  \left( u+x+y+z \right) \delta-\alpha  \left( y+z \right)  \right] {t}^{7}
            \\
            &\quad
            +{\delta}^{2}\mu \left\{ \beta  \left( {u}^{2}-uy+{x}^{2}-xz+{y}^{2}+{z}^{2} \right) \delta\right.
                \\
                &\quad\quad\left.+ \left[u \left( x-z \right)-xy  \right] \alpha \right\} {t}^{6}
            \\
            &\quad
            -\beta \delta  \left[ \left( x+z \right) {y}^{2}+y{z}^{2}+u{z}^{2} \right] {t}^{5}
            \\
            &\quad
            -\delta \mu  
            \left\{  
                \beta\delta\left[  
            \left( {u}^{2}+2 {x}^{2}+xz+{z}^{2} \right) {y}^{2}\right.\right.
                    \\
                    &\quad\quad\quad\left.+z \left( 2x+z \right) uy
            +{z}^{2} \left( 2 {u}^{2}+{x}^{2}\right)  \right]
                    \\
                    &\quad\quad\left.
                    +\alpha \left( {y}^{2}+{z}^{2} \right) 
                    \left[  u \left( x-z \right)-\left( x+z \right) y  \right] 
                \right\} {t}^{4}
             \\
             &\quad+2 \mu  
             \left\{  \beta\delta
                 \left[ 
                    \frac{{y}^{4}}{2}  \left( x+z \right) ^{2}
                     +uz \left( x+z \right) {y}^{3}
                     +u{z}^{3} \left( x+z \right) y
                 \right.\right.
                     \\
                &\qquad\qquad\left.
                    +{z}^{2} \left( {u}^{2}+{x}^{2}+xz+\frac{{z}^{2}}{2} \right) {y}^{2}
                    +\frac{{u}^{2}{z}^{4}}{2}
                 \right]
                     \\
                     &\qquad\qquad\left.
                         +\frac{\alpha {y}^{2}z^{2}}{2}  
                         \left[  u \left( x-z \right) -\left( x+z \right) y \right]
                 \right\} {t}^{2}
                 \\
                 &\quad
            -\left[  \left( x+z \right) y+uz \right]^{2}\beta {z}^{2}\mu {y}^{2}            
        \end{aligned}
        \label{eq:I1Qvi}
        \\
        t^{4}I_{2}^{\text{\ref{eq:Qvi}}} &=
            \beta\left[ u \left( \delta {t}^{2}-{z}^{2} \right) y+{t}^{2}xz\delta 
            -z \left( x+z \right) {y}^{2}\right]
        -\alpha yz{t}^{2} 
        \label{eq:I2Qvi}
        \\
        t^{6}I_{3}^{\text{\ref{eq:Qvi}}} &
        \begin{aligned}[t]
            &=
            \beta\left( \delta {t}^{2}-{z}^{2} \right)  \left( \delta {t}^{2}-{y}^{2} \right)  
            \left( x+z \right)  \left( u+y \right)
            \\
            &\quad
           -\alpha\left[ \delta  \left( {y}^{2}+{z}^{2} \right) {t}^{2}-{y}^{2}{z}^{2} \right]t^{2}
        \end{aligned}
        \label{eq:I3Qvi}
    \end{align}
    \label{eq:intQvi}
\end{subequations}
We note that the degree pattern of these invariants
is $\left( 2,4,4,2 \right)$, $\left( 1,2,2,1 \right)$ and
$\left( 1,2,2,1 \right)$ respectively.
Finally, using the test of \cite{Gubbiotti_dcov},
we obtain that the map \eqref{eq:Qvi} is not variational.
Therefore, we conclude that the dual map \eqref{eq:Qvi} is integrable
in the sense of the existence of invariants, i.e. according to definition
\ref{def:invint}.

\section{Summary and outlook}
\label{sec:conclusions}

In this paper we presented the (P,Q) class of four-dimensional maps.
These maps were obtained by assuming they possess two invariants
satisfying the conditions A, B and C given in section \ref{sec:theproblem}.
In section \ref{sec:theclass} we discussed the integrability properties
of these maps.

Integrability in the (P,Q) list can arise in different ways depending
weather the map is variational or not.
Variational maps are all Liovuille integrable, as remarked in
section \ref{sec:setting}.
%
The only additional structure needed for integrability was then the
Lagrangian, constructed using the method in \cite{Gubbiotti_dcov}.
On the other hand integrability in the non-variational maps can arise
in two different ways.
The pair of maps \eqref{eq:Mi} and \eqref{eq:Di} possessing cubic
growth is \emph{deflatable}.
This means that the two maps arise as non-invertible non-local
transformation from two lower-dimensional maps.
In \cite{GJTV_sanya} we proved that the invariants are preserved
in this process and that the integrability of the three-dimensional
maps can be understood using the definition of Liouville integrability
with a rank two Poisson structure.
All the other maps possess quadratic growth and possess a third invariant
of motion.
In the case of the map \eqref{eq:Miii} the third invariant was found by direct
inspection, while in all the other cases it was produced directly from
the duality approach.
As last remark, we note that the maps with three invariants
admit three different degenerate Poisson structure constructed
using the method of \cite{ByrnesHaggarQuispel1999}, but 
this construction does not yield Liouville integrability.

All the remaining maps have exponential growth and are therefore non
integrable in the sense of the algebraic entropy.
Direct search of invariants for these maps excluded the their
existence up to order 14.
Moreover, using the test of \cite{Gubbiotti_dcov}, we proved that
these exponentially-growing maps are not variational.
Therefore we have a strong evidence of the fact that these maps
do not possess any non-degenerate Poisson structure, and therefore
these cannot be Liouville integrable.
Unfortunately, this result is not enough for a complete proof of the
fact that no non-degenerate Poisson structure exists at all.
This is because, in principle, a fourth-order recurrence relation
can be cast into a system of two second-order recurrence relations
which can be variational.
Therefore, as we did in \cite{GJTV_sanya}, we conjecture that the
maps \eqref{eq:Pii} and \eqref{eq:Diii} either do not admit \emph{any}
full-rank Poisson structure, or for all full-rank Poisson structure they
admit their invariants do not commute.

In table \ref{tab:prop} we give a schematic resume of all the above 
considerations.

\begin{table}[hbt]
    \centering
    \begin{tabular}{cccc}
        \toprule
        Equation & Degree pattern of invariants & Degree of growth & Variational
        \\
        \bottomrule
        (P.i)\textsuperscript{*} & (1,3,3,1), (2,4,4,2) & cubic & no
        \\
        (Q.i)\textsuperscript{*} & (1,2,2,1), (2,4,4,2) & cubic & no
        \\
        \midrule
        (P.ii) & (1,2,2,1), (1,3,3,1) & exponential & no
        \\
        (Q.ii) & (2,4,4,2) & exponential & no
        \\
        \midrule
        (P.iii) & (1,3,3,1), (2,4,4,2), (2,5,5,2) & quadratic & no
        \\
        (Q.iii) & (1,2,2,1), (2,4,4,2) & exponential & no
        \\
        \midrule
        (P.iv) ($\dP_\text{I}^{(2)}$) & (1,3,3,1), (2,4,4,2) & quadratic & yes
        \\
        (Q.iv) & (1,2,2,1), (1,2,2,1), (2,4,4,2)  & quadratic & no
        \\
        \midrule
        (P.v)  & (1,3,3,1), (2,4,4,2) & quadratic & yes
        \\
        (Q.v) & (1,2,2,1), (1,2,2,1), (2,4,4,2)  & quadratic & no
        \\
        \midrule
        (P.vi) ($\dP_\text{II}^{(2)}$) & (1,3,3,1), (2,4,4,2) & quadratic & yes
        \\
        (Q.vi) & (1,2,2,1), (1,2,2,1), (2,4,4,2) & quadratic & no
        \\
        \bottomrule
        \multicolumn{4}{l}{{\footnotesize\textsuperscript{*} Deflatable to a three-dimensional
        Liouville integrable map \cite{GJTV_sanya}.}}
    \end{tabular}
    \caption{Integrability properties of the (P,Q) maps.}
    \label{tab:prop}
\end{table}

The search procedure carried out in this paper has been very fruitful giving
some interesting and non-trivial examples of four-dimensional maps.
Indeed, all the maps, but four are new.
Particularly interesting is the variety of behaviours we encountered
in the maps of the class (P,Q).
Work is in progress to characterize the surfaces generated by the 
invariants in both integrable and non-integrable cases. 
We expect this to give some hints on how the integrability arises
from purely geometrical considerations.
This is well known for maps in two dimension with the theory of
elliptic fibrations applied to the QRT mapping \cite{Duistermaat2011book}.
However, it was discussed in \cite{JoshiViallet2017,GJTV_sanya}
how examples with cubic growth can go beyond the existence of
elliptic fibrations making the underlying geometrical structure 
more complex and richier.

Finally, we believe that the direct search of maps with
invariants alongside with the algorithmic tests available in the
discrete setting may produce many new results and integrable
maps in the next years.
Analogous procedure in the continuous case
still yield many new result after more that fifty years of
their introduction \cite{Fris1965,PostWinternitz2011,EscobarWinternitzYurdusen2018}.
Work is in progress to extend the present class by considering
invariants of more general form.

\section{Acknowledgments}

This research was supported by an Australian Laureate Fellowship \#FL120100094 
and grant \#DP160101728 from the Australian Research Council.

\bibliographystyle{plain}
\bibliography{bibliography}

\end{document}